\newenvironment{breakablealgorithm}
  {% \begin{breakablealgorithm}
   \begin{center}
     \refstepcounter{algorithm}% New algorithm
     \hrule height.8pt depth0pt \kern2pt% \@fs@pre for \@fs@ruled
     \renewcommand{\caption}[2][\relax]{% Make a new \caption
       {\raggedright\textbf{\fname@algorithm~\thealgorithm} ##2\par}%
       \ifx\relax##1\relax % #1 is \relax
         \addcontentsline{loa}{algorithm}{\protect\numberline{\thealgorithm}##2}%
       \else % #1 is not \relax
         \addcontentsline{loa}{algorithm}{\protect\numberline{\thealgorithm}##1}%
       \fi
       \kern2pt\hrule\kern2pt
     }
  }{% \end{breakablealgorithm}
     \kern2pt\hrule\relax% \@fs@post for \@fs@ruled
   \end{center}
  }
\newcommand*{\algrule}[1][\algorithmicindent]{%
  \makebox[#1][l]{%
    \hspace*{.2em}% <------------- This is where the rule starts from
    \vrule height .75\baselineskip depth .25\baselineskip
  }
}
\def\ALG@printindent{%
    \ifnum \theALG@nested>0% is there anything to print
    \ifx\ALG@text\ALG@x@notext% is this an end group without any text?
    % do nothing
    \else
    \unskip
    % draw a rule for each indent level
    \ALG@printindent@tempcnta=1
    \loop
    \algrule[\csname ALG@ind@\the\ALG@printindent@tempcnta\endcsname]%
    \advance \ALG@printindent@tempcnta 1
    \ifnum \ALG@printindent@tempcnta<\numexpr\theALG@nested+1\relax
    \repeat
    \fi
    \fi
}
\patchcmd{\ALG@doentity}{\noindent\hskip\ALG@tlm}{\ALG@printindent}{}{\errmessage{failed to patch}}
\patchcmd{\ALG@doentity}{\item[]\nointerlineskip}{}{}{} % no spurious vertical space
\DeclareMathOperator{\tr}{tr}
\DeclareMathOperator{\diag}{diag}
\newcommand{\argmaxF}{\mathop{\mathrm{argmax}}\limits}
\newcommand{\argminF}{\mathop{\mathrm{argmin}}\limits}
\begin{document}
\title{An information-theoretic branch-and-prune algorithm for discrete phase optimization of RIS in massive MIMO}
%
%\makeatletter
%\def\ps@IEEEtitlepagestyle{
%  \def\@oddfoot{\mycopyrightnotice}
%  \def\@evenfoot{}
%}
%\def\mycopyrightnotice{
%  {\footnotesize
%  \begin{minipage}{\textwidth}
%  \centering
%  Copyright~\copyright~2022 IEEE. Personal use of this material is permitted. However, permission to use this  \\ 
%  material for any other purposes must be obtained from the IEEE by sending a request to pubs-permissions@ieee.org.
%  \end{minipage}
%  }
%}
\makeatletter
\def\ps@IEEEtitlepagestyle{%
  \def\@oddhead{\mycopyrightnotice}%
  \def\@oddfoot{\hbox{}\@IEEEheaderstyle\leftmark\hfil\thepage}\relax
  \def\@evenhead{\@IEEEheaderstyle\thepage\hfil\leftmark\hbox{}}\relax
  \def\@evenfoot{}%
}
\def\mycopyrightnotice{%
  \begin{minipage}{\textwidth}
  \scriptsize
  Copyright~\copyright~2022 IEEE. Personal use of this material is permitted. Permission from IEEE must be obtained for all other uses, in any current or future media, including\\reprinting/republishing this material for advertising or promotional purposes, creating new collective works, for resale or redistribution to servers or lists, or reuse of any copyrighted component of this work in other works by sending a request to pubs-permissions@ieee.org. Accepted for publication in IEEE Transactions on Vehicular Technology
  \end{minipage}
}

\author{I. Zakir Ahmed,~\IEEEmembership{Member,~IEEE,} Hamid R. Sadjadpour,~\IEEEmembership{Senior Member,~IEEE,} and Shahram Yousefi,~\IEEEmembership{Senior Member,~IEEE}
        % <-this % stops a space
\thanks{I. Zakir Ahmed and Hamid R. Sadjadpour are with the Department of Electrical and Computer Engineering, University of California at Santa Cruz, Santa Cruz, CA 95064 USA. Shahram Yousefi is with the Department of Electrical and Computer Engineering, Queen’s University, Kingston, ON K7L 3N6, Canada.}}% <-this % stops a space
\markboth{}%
{Shell \MakeLowercase{\textit{et al.}}: A Sample Article Using IEEEtran.cls for IEEE Journals}
%
% make the title area

\maketitle
% Abstract
\begin{abstract}
In this paper, we consider passive RIS-assisted multi-user communication between wireless nodes to improve the blocked line-of-sight (LOS) link performance. The wireless nodes are assumed to be equipped with Massive Multiple-Input Multiple-Output antennas, hybrid precoder, combiner, and low-resolution analog-to-digital converters (ADCs). We first derive the expression for the Cramer-Rao lower bound (CRLB) of the Mean Squared Error (MSE) of the received and combined signal at the intended receiver under interference. By appropriate design of the hybrid precoder, combiner, and RIS phase settings, it can be shown that the MSE achieves the CRLB. We further show that minimizing the MSE w.r.t. the phase settings of the RIS is equivalent to maximizing the throughput and energy efficiency of the system. We then propose a novel Information-Directed Branch-and-Prune (IDBP) algorithm to derive the phase settings of the RIS. We, for the first time in the literature, use an information-theoretic measure to decide on the pruning rules in a tree-search algorithm to arrive at the RIS phase-setting solution, which is vastly different compared to the traditional branch-and-bound algorithm that uses bounds of the cost function to define the pruning rules. In addition, we provide the theoretical guarantees of the near-optimality of the RIS phase-setting solution thus obtained using the Asymptotic Equipartition property. This also ensures near-optimal throughput and MSE performance.\\
\end{abstract}

\begin{IEEEkeywords}
Cramer-Rao lower bound, Chow-Lee algorithm, Kullback-Leibler divergence, asymptotic equipartition property, typical set, Markov decision process.
\end{IEEEkeywords}

%\IEEEpeerreviewmaketitle

\section{Introduction}
The Reconfigurable Intelligent Surfaces (RIS) are known to mitigate the harsh effects of wireless channels such as obstruction, shadowing, fading, and other complex scenarios encountered between the transmitter and receiver of interest. This is achieved by efficient beamforming and interference management by the RIS. The RIS comprises an array of large number of reflecting elements, each of which can be controlled to change the amplitude, delay (phase shift), and polarization of the incident signal from the transmitter. In the case of passive RIS structures, only the phase of the incident signal is changed, and the RIS consumes no power in such a situation. In one of the typical architectures, the desired phase shift to be induced upon the incident signal can be achieved by controlling the bias voltage to the positive-intrinsic-negative (PIN) diode associated with each of the RIS elements \cite{QingTut}.\\
\indent The vehicular communication frameworks, namely Vehicle to Everything (V2X) based on the IEEE 802.11p Wireless Local Area Network (WLAN) and the Cellular-V2X (C-V2X) defined by the 3GPP and 5G Automotive Association (5GAA) aim to achieve the goals of the Intelligent Transportation Systems (ITS) \cite{IEEE801p,3GPPcv2x}. The objectives of the ITS include collision avoidance, ease road congestion, accident information, pedestrian safety, emergency vehicle approach warning, and parking assistance, to name a few. With the adoption of massive Multiple-Input Multiple-Output (MaMIMO), millimeter-wave (mmWave), and Terahertz (THz) communications in the next generations of wireless communication, it is natural that the vehicular communication nodes will encompass them in the future. A millimeter MaMIMO framework for C-V2X is proposed and studied in \cite{RisVeh1}. Vehicular wireless links are prone to significant challenges due to the highly dynamic nature of the channels due to large buildings, continuous traffic, and changing landscapes. The integration of the RIS technology to vehicular communication is being studied in the literature and has shown promising results. They are shown to maximize the sum V2X link capacity while guaranteeing the minimum SINR of the vehicle-to-vehicle links \cite{VehicleIRS1,VehicleIRS2,VehicleIRS3}.\\
\indent RIS is one of the key enablers for the sixth-generation (6G) mobile communication networks. This is particularly useful for problems of coverage extension in mmWave and THz communication systems due to the unfavorable free-space omnidirectional path loss in these frequency bands \cite{AppIRS,6GIrs}. In addition to enhancing the wireless link's performance between the transmitter and receiver, the RIS has found applications in providing physical layer security. Advanced signal processing techniques are used to manipulate the wireless channel using RIS to guarantee the security of the communication content in an information-theoretic sense. Essentially the RIS ascertains physical-layer security by configuring the RIS elements in such a fashion to add the wireless signals constructively to the legitimate receiver but destructively to a potential eavesdropper \cite{SecIRS}. A few other examples of the applications of RIS include enhancing the link performance of the cell-edge users who suffer high signal attenuation from the base station (BS), co-channel interference from near BSs \cite{edgeIRS,nomaIRS}, Interference management to support low-power transmission to enhance individual data links in device-to-device networks  \cite{d2dIRS}, In non-orthogonal multiple-access (NOMA) systems, RIS could be considered to increase the number of served users and enhance the rate of communication, which constitutes the major requirement to be accomplished in these systems \cite{nomaIRS,nomaIRS_1,starRIS}. Improve the link performance between the unmanned aerial vehicle (UAV) network and the ground users for UAV trajectory optimization and improve overall system performance, including energy efficiency \cite{uavIRS}.\\
\indent The fundamental problem in all of the above applications is configuring the RIS phase-shift setting to achieve a specific goal. Finding an optimal RIS configuration for a set of $K$ discrete phase shifts with $M$ element array has an exponential time complexity $O(K^M)$. In addition, the objective function is often non-convex in the decision parameters (RIS phase shift settings). Identifying the optimal RIS phase shift is a non-convex NP-Hard combinatorial optimization problem.\\
\indent The proposed algorithm benefits several similar problems related to the wireless backhaul link in the vehicular network or the roadside unit layer, vehicle-to-everything framework, and cellular systems, to name a few.

\subsection{Related works}
Previously, a branch-and-bound (BnB) algorithm was used to solve an optimization problem involving RIS phase shifts to maximize the spectral efficiency (SE) \cite{bnbconvex}. A block-coordinated descent algorithm to maximize the achievable uplink rate with multiple single-antenna users and multi-antenna base stations was proposed in \cite{VarADCIRS}. There, resolution-adaptive analog-to-digital converters (ADCs) operating at millimeter-wave (mmWave) frequencies were assisted by a passive RIS. A trace-maximization-based optimization framework was presented in \cite{RisASyed} to study the effect of the link capacity in a point-to-point MIMO link that considers two RIS architectures. A trellis-based joint optimization of the beamformer and the RIS discrete phase shifts to minimize the mean squared error (MSE) of the received symbols was proposed in \cite{trellisIRS}. In \cite{IRSnoma}, a RIS-assisted architecture is proposed to maximize channel power gains between two users in a NOMA framework. A branch-and-bound (BnB) algorithm is used to solve an optimization problem involving RIS phase shifts to maximize the spectral efficiency (SE) in \cite{bnbconvex}. The solution obtained is achieved by linear approximation of the objective function involving the phase shifts of the RIS. Also, the SE maximization is accomplished by relaxing it to a convex problem. RIS-assisted optimal beamforming for a Multiple-Input Single-Output (MISO) communication system is proposed in \cite{manifoldIRS}. An optimal global solution using BnB is claimed in it. However, the results obtained are not compared with the exhaustive search technique. In addition, the bounds for the BnB algorithm are obtained using convex approximations. The authors in \cite{DiscPhIRS1} propose a low-complexity algorithm using alternating optimization (AO) to jointly optimize transmit-beamforming and RIS phase shift settings to minimize the transmit power from the multi-antenna access point (AP) to multiple single-antenna users. A RIS-aided point-to-point multi-data-stream MIMO is studied in \cite{JointRISPrec}. An AO-based algorithm to jointly optimize the RIS phase shifts and precoder is investigated in it to minimize the symbol rate error. However, the combiner design is not considered in this work. Also, the optimality guarantees of the proposed AO algorithm are not investigated. In \cite{IoTIRS}, a RIS-aided MIMO simultaneous wireless information and power transfer (SWIPT) for Internet-of-Things (IoT) networks are investigated. A BnB algorithm is proposed to maximize the minimum signal-to-interference-plus-noise ratio (SINR) among all information decoders (IDs) while maintaining the minimum total harvested energy at all energy receivers (ERs). In it, the authors relax the quadratic assignment problem to a linear integer problem and use the BnB method to obtain the solution. A joint multi-UAV trajectory/communication optimization problem in a network with RISs on uneven terrain is proposed in \cite{MultUAV}. An effective path-planning algorithm for this optimization problem is proposed. Although the paper deems that the issue of RIS control (either phase-shift or amplitude) is beyond its scope, a mathematically rigorous proof of its asymptotic optimality is given. However, the problem considered is a continuous optimization problem, and the computational complexity of the approach is not discussed in it. An asymptotic analysis for RIS assisted communication between multi-antenna users for mmWave MaMIMO is studied in \cite{intfDian1}. The problem of minimizing the transmit power subject to the rate constraint is also analyzed for the scenario without direct paths in the pure LOS propagation.\\
\indent All the earlier works in literature make convex approximations of the objective function under consideration and solve the same using various well-established algorithms, for example, Branch-and-Bound (BnB). However, to the best of our knowledge, none of the earlier works show theoretical guarantees for either optimality or near-optimality, considering the original non-convex problem. 

\subsection{Our contribution}
In this paper, we consider a RIS-assisted multi-user MaMIMO communication system under interference. The contributions of this paper are as follows:
\textit{(i)} We first derive the expression for the Cramer-Rao lower bound (CRLB) of the MSE of the received and combined signal as a function of the phase settings of the RIS for a given hybrid precoder, combiner, and ADC bits,\\
\textit{(ii)} we show that minimizing the MSE by adjusting the RIS phase shifts also ensures maximization of throughput and energy efficiency,\\
\textit{(iii)} we show that the MSE achieves the CRLB with the appropriate design of the hybrid precoders, and combiners,\\
\textit{(iv)} we present a novel Information-Directed Branch-and-Prune (IDBP) algorithm, in which, we, to the best of our knowledge, for the first time in the literature use an information-theoretic measure to decide on the pruning rules in a tree-search algorithm to arrive at the RIS phase-setting solution, which is vastly different compared to the traditional branch-and-bound algorithm that uses bounds of the cost function to define the pruning rules.\\
\textit{(v)} we establish theoretical guarantees for near-optimality, and substantiate the claims by comparing the solutions obtained with the exhaustive search method for a smaller value of $M$. 
\textit{(vi)} We compare the performance and the time complexity of the proposed algorithm with the state-of-the-art trace-maximization-based approach for MIMO transceiver structure proposed in \cite{RisASyed}, and the AO algorithm proposed in \cite{JointRISPrec}, both for larger $M$.

\subsection{Notations}
The column vectors are represented as boldface small letters and matrices as boldface uppercase letters. The primary diagonal of a matrix is denoted as ${\rm diag}(\cdot)$, and all expectations $E[\cdot]$ are over the random variable $\bold{n}$, which is an AWGN vector. The multivariate normal distribution with mean $\boldsymbol{\mu}$ and covariance $\boldsymbol{\varphi}$ is denoted as $\mathcal{N}(\boldsymbol{\mu},\boldsymbol{\varphi})$ and $\mathcal{CN}(\bold{0},{\boldsymbol{\varphi}})$ denotes a multivariate circularly-symmetric Gaussian distribution. The trace of a matrix $\bold{A}$ is shown as $\tr{(\bold{A})}$ and $\bold{I}_N$ represents a $N \times N$ identity matrix. The frobenius norm of matrix $\bold{A}$ is indicated as $\lVert \bold{A} \rVert_F$. The superscripts T and H denote transpose and Hermitian transpose, respectively. The terms $\mathbb{R}$, and $\mathbb{C}$ indicate the space of real, and complex numbers, respectively.

\subsection{Paper content}
The rest of this paper is organized as follows. Section \ref{sigmod} describes the system model and parameters. In Section \ref{PrecCombDesgn}, we describe the hybrid precoder and combiner design. We discuss the RIS phase shift optimization and derive the optimization framework in Section \ref{ris_opt}. Section \ref{ris_opt} also details the design to fine-tune the digital precoders and combiners. We describe the theoretical framework of the proposed IDBP algorithm in Section \ref{idbp_algo}, including the optimality analysis. The proposed IDBP algorithm is detailed in Section \ref{AlgoDes}. The computational complexity analysis is described in Section \ref{CCA}, followed by simulation results in Sections \ref{Sim}, and conclusions in Section \ref{Conc} respectively. Supporting Theorems and their proofs are presented in the Appendices.

\section{Signal Model}\label{sigmod}
We consider a RIS equipped with $M$ passive reflecting elements each of which can be set to $K$ discrete phase-shift values to aid the millimeter-wave (mmWave) Massive Multiple-Input Multiple-Output (MaMIMO) communication between two roadside units (RSU) in a vehicular wireless backhaul network, typically called the RSU-to-RSU wireless link. In addition, we consider the RSUs to be equipped with hybrid precoders, hybrid combiners, and low-resolution ADCs. The communication is assumed to have a blocked line-of-sight (LoS) signal to the intended RSU receiver. An example use-case scenarios is illustrated using Fig.\ref{fig1} \cite{RisVeh1}. This proposed signal model can be extended to other use-case scenario like V2X, cellular wireless backhaul network nodes without loss of generality.\\
\begin{figure}[ht]
\centering
\includegraphics[scale=0.31]{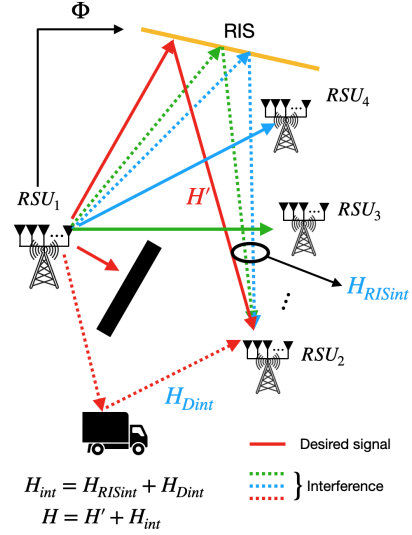}%Earlier 0.42 for both
\caption{An example of an RSU layer employing a RIS for enhancing the performance of a blocked LOS link under Interference.}
\label{fig1}
\end{figure}
\begin{figure}[ht]
\centering
\includegraphics[scale=0.34]{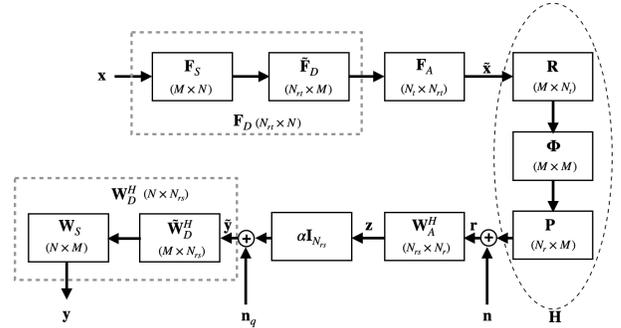}%Earlier 0.42 for both
\caption{System model with RIS-assisted channel with interference.}
\label{fig2}
\end{figure}
\indent The signal model of such a communication system is shown in Fig.\ref{fig2}. Here, we denote ${\bold{F}_D}$ and ${\bold{F}_A}$ to be the digital and analog precoders, respectively. Similarly, we represent ${\bold{W}_D^H}$ and ${\bold{W}_A^H}$ to be the digital and analog combiners, respectively. The vector $\bold{x}$ is an $N \times1$ transmitted signal vector whose average power is unity. Let $N_{rt}$ and $N_{rs}$ denote the number of RF Chains at the transmitter and receiver, respectively. Also, $N_t$ and $N_r$ represent the number of transmit and receive antennas, respectively. The effective channel $\bold{H}$ which is a $N_r \times N_t$ matrix at the intended receiver will be a combination of the RIS reflected signal from the transmitter and the interference of the signal from the same transmitter intended for the other multi-antenna users. That is 
\begin{equation}\label{eq_aa1}
\begin{split}
\bold{H} = \bold{H'} + \bold{H}_{int},
\end{split}
\end{equation}
where the channel $\bold{H'}$ represents the blocked LOS channel between TX and RX assisted by the RIS in the absence of interference, and can be expressed as $\bold{H'} = \bold{Q}\bold{\Phi}\bold{G}$. The term $\bold{G} \in \mathbb{C}^{M \times N_t}$ is the transimtter-to-RIS (TX-RIS) channel, $\bold{Q} \in \mathbb{C}^{N_r \times M}$ being the RIS-to-receiver(RIS-RX) channel \cite{RisASyed}. The action of the $M$ element RIS is represented as $\bold{\Phi} = \diag( e^{\phi_1}, e^{\phi_2}, \cdots, e^{\phi_M})$. Here $\phi_n \in \Phi$, where $\Phi$ is a finite set phase angles with cardinality $K$.\\
\indent The interference channel $\bold{H}_{int}$ represents the combination of the RIS reflected signals from the transmitter to the other users but arriving at the intended receiver $\bold{H}_{RISint}$, and the non-LOS reflected from the transmitter to the receiver not going through the RIS $\bold{H}_{Dint}$ (See Fig. \ref{fig1}). Formally
\begin{equation}\label{eq_aa2}
\begin{split}
\bold{H}_{int} = \bold{H}_{RISint} + \bold{H}_{Dint} = \sum_{i=1}^{\beta} \bold{Q}_i\bold{\Phi}\bold{G}_i + \bold{H}_{Dint},
\end{split}
\end{equation}
where the components $\{ \bold{G}_i \}_{i=1}^{\beta}$ represent the transmitter-to-RIS of the interferers, similarly $\{ \bold{Q}_i \}_{i=1}^{\beta}$ are the RIS-to-receiver channels of the interferers, with ${\beta}$ indicating the total number of interferers. Hence the effective channel between the TX and the RX considering the interferers can be written as
\begin{equation}\label{eq_aa3}
\begin{split}
\bold{H} = \bold{H'} + \bold{H}_{int} = \bold{Q}\bold{\Phi}\bold{G} + \sum_{i=1}^{\beta} \bold{Q}_i\bold{\Phi}\bold{G}_i + \bold{H}_{Dint}.
\end{split}
\end{equation}
Inspired by the channel model adopted in \cite{RisASyed, intfDian1}, we express the RIS-assisted channel with interference given in \eqref{eq_aa3} as a traditional mmWave MIMO channel comprising of $\gamma$ paths (here $\gamma = \beta + 2$, see \eqref{eq_aa3})
\begin{equation}\label{eq_aa4}
\begin{split}
\bold{H} = \bold{A}_r \bold{D}  \bold{A}_t^H,
\end{split}
\end{equation}
where $\bold{D}$ is a $\gamma \times \gamma$ diagonal matrix comprising of the complex gains $\{ \alpha_i \}_{i=1}^{\gamma}$, the matrices $\bold{A}_r$ and $\bold{A}_t$ correspond to the collection of the steering vectors $\bold{a}_r(\phi_r), \bold{a}_t(\theta_t)$ with $\phi_r^i$ and $\theta_t^i$ indicating the angles of arrival and departures respectively. That is
\begin{equation}\label{eq_aa5}
\begin{split}
\bold{A}_r &= [\bold{a}_r(\phi_r^1), \bold{a}_r(\phi_r^2), \cdots, \bold{a}_r(\phi_r^{\gamma})],\\
\bold{A}_t &= [\bold{a}_t(\theta_t^1), \bold{a}_t(\theta_t^2), \cdots, \bold{a}_t(\theta_t^{\gamma})].
\end{split}
\end{equation}
Now, when we choose the number of TX antennas $N_t$ and RX antennas $N_r$ to be very large, the Singular Value Decomposition (SVD) of the matrix $\bold{H}$ in \eqref{eq_aa4} can be shown as \cite{intfDian1,intfDian2,intfAyach}
\begin{equation}\label{eq_aa6}
\begin{split}
\bold{H} = \bold{U}\bold{\Sigma}\bold{V}^H = [\bold{A}_r|\bold{A}_r^{\perp}]\bold{\Sigma}[\bold{\tilde{A}}_t|\bold{\tilde{A}}_t^{\perp}]^H,
\end{split}
\end{equation}
where $\Sigma$ is a diagonal matrix comprising of the singular values on its diagonal
\begin{equation}\label{eq_aa7}
\begin{split}
[\Sigma]_{ii}= 
\begin{cases}
    |\alpha_i|,& \text{ for }1 \le i \le \gamma\\
    0,              & \text{ for }i > \gamma,
\end{cases}
\end{split}
\end{equation}
and the matrix
\begin{equation}\label{eq_aa8}
\begin{split}
\bold{\tilde{A}}_t &= [e^{j\zeta_1}\bold{a}_t(\theta_t^1), e^{j\zeta_2}\bold{a}_t(\theta_t^2), \cdots, e^{j\zeta_{\gamma}}\bold{a}_t(\theta_t^{\gamma})],
\end{split}
\end{equation}
where $\zeta_i$ is the phase component of the complex gain $\alpha_i$. Taking into account the action of the RIS phase shifts $\bold{\Phi}$, we can rewrite \eqref{eq_aa6} as
\begin{equation}\label{eq_aa9}
\begin{split}
\bold{H} &= \bold{U}\bold{\Sigma}\bold{V}^H = [\bold{A}_r|\bold{A}_r^{\perp}]\bold{\Sigma}[\bold{\tilde{A}}_t|\bold{\tilde{A}}_t^{\perp}]^H,\\
&= \bold{U} \bold{\Sigma} \bold{\Phi} \bold{R} = \bold{P} \bold{\Phi} \bold{R},
\end{split}
\end{equation}
where $\bold{R} = \diag(e^{j\zeta_1},e^{j\zeta_2},\cdots,e^{j\zeta_{(\beta + 1)}}\cdots)\bold{V}^H$, and $\bold{P} = \bold{U}\bold{\Sigma}$. It is to be noted that $\bold{P}$ and $\bold{R}$ are not unitary matrices anymore. Hence we can visualize the effective channel $\bold{H}$ as 
\begin{equation}\label{eq_1}
\begin{split}
\bold{H} = \bold{P}\bold{\Phi}\bold{R}.
\end{split}
\end{equation}
$\bold{R} \in \mathbb{C}^{M \times N_t}$ is the effective transimtter-to-RIS (TX-RIS) channel, $\bold{P} \in \mathbb{C}^{N_r \times M}$ the RIS-to-receiver(RIS-RX) channel, both considering the interference.

In this work, we focus on minimizing the mean squared error (MSE) performance of the communication link by optimizing the phase shifts.
The transmitted signal $\bold{\tilde{x}}$ and the received signal $\bold{r}$ are represented as 
\begin{equation}\label{eq1a}
{\bold{\tilde{x}}} = {\bold{F}_A}{\bold{F}_D}{\bold{x}},\text{ }{\bold{r}} = {\bold{H}}{\bold{\tilde{x}}}+{\bold{n}}.
\end{equation}

Here, ${\bold{n}}$ is an $N_r\times1$ noise vector of independent and identically distributed (i.i.d) complex Gaussian random variables such that ${\bold{n}} \sim \mathcal{CN}(\bold{0},{\sigma_n^2}{\bold{I}_{N_r}})$. The received symbol vector $\bold{r}$ is analog-combined with ${\bold{W}_A^H}$ to get ${\bold{z}} = {\bold{W}_A^H}{\bold{r}}$ and later digitized using a low-resolution ADCs to produce $\bold{\tilde{y}} = \bold{Q}_b(\bold{z}) = \alpha \bold{I}_{N_{rs}}{\bold{z}}+{\bold{n_q}}$. The quantizer $\bold{Q}_b(\bold{z})$ is modeled as an Additive Quantization Noise Model (AQNM), where $\alpha = 1 - \frac{\pi\sqrt{3}}{2}2^{-2b}$, and $b$ is the bit resolution of the ADCs employed across all the RF paths \cite{Rangan,Uplink} in the receiver. The vector $\bold{n}_q$ is the additive quantization noise which is uncorrelated with $\bold{z}$ and has a Gaussian distribution: ${\bold{n}_q} \sim \mathcal{CN}(\bold{0},{\bold{D}_q^2})$ \cite{Rangan,Uplink}. This signal is later combined using the digital combiner ${\bold{W}_D^H}$ to produce the output signal ${\bold{y}} = {\bold{W}_D^H}{\bold{\tilde{y}}}$.

The relationship between the transmitted signal vector $\bold{x}$ and the received symbol vector $\bold{y}$ at the receiver is given by
\begin{equation}\label{eq_3}
\begin{split}
\bold{y} = \alpha\bold{W}_D^H\bold{W}_A^H\bold{P}\bold{\Phi}\bold{R}\bold{F}_A\bold{F}_D\bold{x} + \alpha \bold{W}_D^H\bold{W}_A^H \bold{n} + \bold{W}_D^H\bold{n_q}, 
\end{split}
\end{equation}
where the dimensions of the hybrid precoder and combiner are as follows:
$\bold{F}_D \in \mathbb{C}^{N_{rt} \times N}$, ${\bold{F}_A} \in \mathbb{C}^{N_t \times N_{rt}}$, ${\bold{W}_A^H} \in \mathbb{C}^{N_{rs} \times N_r}$, and ${\bold{W}_D^H} \in \mathbb{C}^{N \times N_{rs}}$.\\
\indent The precoders ${\bold{F}_D}$ and ${\bold{F}_A}$, and combiners ${\bold{W}_D^H}$ and ${\bold{W}_A^H}$ are designed for a given channel realization $\bold{H}$. We assume that the perfect channel state information $\bold{P}$ and $\bold{R}$ are known both to the transmitter and the receiver, and the topic of channel estimation is outside the scope of this paper. We further assume that the number of RF paths $N_{rs}$ on the receiver is the same as the number of parallel data streams $N$. The analysis is easy to extend and similar for the case $N_{rs} \ne N$.\\

\noindent \textit{Mean squared error performance:}\\

It can be shown that the expression for the MSE $\delta$ of the received, quantized, and combined signal $\bold{y}$ using \eqref{eq_3} as
\begin{equation}\label{eq_5}
\begin{split}
\delta &\triangleq \tr(\bold{M}(\bold{x})),
\end{split}
\end{equation}
where $\bold{M}(\bold{x})$ is the MSE matrix that can be written as
\begin{equation}\label{eq_5a}
\begin{split}
&\bold{M}(\bold{x}) = (E\big[ (\bold{y}-\bold{x})(\bold{y}-\bold{x})^H\big]),\\
&=  p(\bold{K}-\bold{I}_{N})(\bold{K}-\bold{I}_{N})^H + \alpha^2\sigma_{n}^2\bold{W}\bold{W}^H + \bold{W}_D^H\bold{D}_q^2\bold{W}_D.
\end{split}
\end{equation}
Here ${\bold{K}} = \alpha \bold{W}_D^H \bold{W}_A^H \bold{P}\bold{\Phi}\bold{R}\bold{F}_A\bold{F}_D , E[{\bold{x}}{\bold{x}}^H] = p{\bold{I}_{N}}, \bold{W} = \bold{W}_D^H \bold{W}_A^H, E[{\bold{n}}{\bold{n}}^H] = {\sigma_n^2}{\bold{I}_{N_r}}, E[{\bold{n_q}}{\bold{n_q}}^H] = \bold{D}_q^2, \bold{D}_q^2 = \alpha(1-\alpha){\text{diag}}[ {\bold{W}_A^H}{\bold{H}}({\bold{W}_A^H}{\bold{H}})^H+{\bold{I}_{N_{rs}}}], E[{\bold{n}}{\bold{n_q}}^H] = 0$, and $p$ is the average power of the symbol $\bold{x}$.\\

\indent The design of the precoder, combiner, and the RIS phase-shift settings to minimize the MSE $\delta$ for a given $b$-bit ADC can be posed as a multi-dimensional optimization problem
\begin{equation}\label{eq_6}
\begin{split}
&(\bold{F}_A^{opt},\bold{F}_D^{opt},{\bold{W}_A^H}^{opt},{\bold{W}_D^H}^{opt},\bold{\Phi}^{opt}) = \argminF_{\bold{F}_A,\bold{F}_D,{\bold{W}_A^H},{\bold{W}_D^H},\bold{\Phi}}\delta.
\end{split}
\end{equation}
If the precoders, combiners, and the RIS phase settings are chosen such that $\bold{K} = \bold{I}_{N}$, then the MSE matrix $\bold{M}(\bold{x})$ can be written as
\begin{equation}\label{eq_6a}
\bold{M}(\bold{x}) = \alpha^2\sigma_{n}^2\bold{W}\bold{W}^H + \bold{W}_D^H\bold{D}_q^2\bold{W}_D.
\end{equation}
An alternate equivalent problem to \eqref{eq_6} can be posed as
\begin{equation}\label{eq_7}
\begin{split}
\bold{K} = \alpha \bold{W}_D^H \bold{W}_A^H \bold{P}\bold{\Phi}\bold{R}\bold{F}_A\bold{F}_D &= \bold{I}_{N},\\
\text{ such that } \alpha^2\sigma_{n}^2\bold{W}\bold{W}^H + \bold{W}_D^H\bold{D}_q^2\bold{W}_D &= \bold{0}.
\end{split}
\end{equation}
Both \eqref{eq_6} and \eqref{eq_7} are challenging to solve given the constraints on the analog precoder and combiner \cite{Zakir7}. We take a multi-step approach to solve the problem by designing the hybrid precoder and combiner as a first step. In the next step, we derive the RIS phase setting, followed by fine-tuning the design of the digital precoder and combiner.

\section{Precoder and combiner Design}\label{PrecCombDesgn}
In order to design the precoders and combiners, we factor the digital precoder and combiner as
\begin{equation}\label{eq_8}
%\begin{split}
\bold{F}_D = \bold{\tilde{F}}_D\bold{F}_S,\text{ }\bold{W}_D^H = \bold{W}_S\bold{\tilde{W}}_D^H,
%\end{split}
\end{equation}
where $\bold{\tilde{F}}_D \in \mathbb{C}^{N \times M}, \bold{F}_S \in \mathbb{C}^{M \times N}, \bold{\tilde{W}}_D^H \in \mathbb{C}^{M \times N}$, and $\bold{W}_S \in \mathbb{C}^{N \times M}$. This is illustrated using Fig. \ref{fig2}. We first focus on designing the partial digital precoder $\bold{\tilde{F}}_D$ and partial digital combiner $\bold{\tilde{W}}_D^H$, and the analog precoder $\bold{F}_A$ and analog combiner $\bold{W}_A^H$. We will later revisit the design of the other component of the digital precoder and combiner $\bold{F}_S$ and $\bold{W}_S$ in section \ref{part_dsgn}.

The hybrid precoding and combing techniques for systems employing phase shifters in mmWave transceiver architectures impose constraints on them. The analog precoder $\bold{F}_A$ and combiner ${\bold{W}_A^H}$ entries need to satisfy unit norm entries in them \cite{SigProc,PreDsgn,Zakir2,Zakir7}. We design the analog precoder $\bold{F}_A$ and the partial digital precoder $\bold{\tilde{F}}_D$ such that
\begin{equation}\label{eq_9}
\begin{split}
\bold{R}\bold{F}_A\bold{\tilde{F}}_D \approx \bold{I}_M.
\end{split}
\end{equation}
The hybrid precoders are derived upon solving the optimization problem \cite{PreDsgn,SigProc} stated below. 
\begin{equation}\label{eq_10}
\begin{aligned}
({\bold{F}_A^{opt}},{\bold{\tilde{F}}_D^{opt}}) = & \argminF_{{\bold{\tilde{F}}_D},{\bold{F}_A}}{\lVert {\bold{R}^{\dagger} - {{\bold{F}_A}{\bold{\tilde{F}}_D}}} \rVert }_F, \\
\text{such that } & {\bold{F_A}}\in{\mathcal{F}_{RF}}, {\lVert {{\bold{\tilde{F}}_D}{\bold{F}_A}} \rVert }_F^2 = N.
\end{aligned}
\end{equation}
The set $\mathcal{F}_{RF}$ is the set of all possible analog precoders that correspond to a hybrid precoder architecture based on phase shifters. This includes all possible $N_t \times N_{rt}$ matrices with constant magnitude entries. The term $\bold{R}^{\dagger}$ denotes the right inverse of $\bold{R}$.\\
\indent Similarly, the analog combiner $\bold{W}_A^H$ and the partial digital combiner $\bold{\tilde{W}}_D^H$ are designed such that
\begin{equation}\label{eq_11}
\begin{split}
\bold{\tilde{W}}_D^H\bold{W}_A^H\bold{P} \approx \bold{I}_M.
\end{split}
\end{equation}
The hybrid combiners are derived using \cite{PreDsgn}
\begin{equation}\label{eq_12}
\begin{aligned}
%{\bold{\tilde{W}}_D}^{H^{opt}}
%({\bold{W}_A^H}^{opt},{\bold{\tilde{W}}_D^H}^{opt}) &= \argminF_{\bold{\tilde{W}_D^H},{\bold{W}_A^H}}{\lVert {\bold{P}^{\ddagger} - {\bold{\tilde{W}}_D^H}{{\bold{W}_A^H}}} \rVert }_F, \\
({\bold{W}_A^H}^{opt},{\bold{\tilde{W}}_D}^{H^{opt}}) &= \argminF_{\bold{\tilde{W}_D^H},{\bold{W}_A^H}}{\lVert {\bold{P}^{\ddagger} - {\bold{\tilde{W}}_D^H}{{\bold{W}_A^H}}} \rVert }_F, \\
\text{such that } & {\bold{W}_A^H}\in{\mathcal{W}_{RF}}, {\lVert {{\bold{\tilde{W}}_D^H}{\bold{W}_A^H}} \rVert }_F^2 = N.
\end{aligned}
\end{equation}
Here again the set $\mathcal{W}_{RF}$ is the set of all possible analog combiners that correspond to hybrid combiner architecture based on phase shifters. This includes all possible $N_{rs} \times N_r$ matrices with constant magnitude entries. The term $\bold{P}^{\ddagger}$ denotes the left inverse of $\bold{P}$.

\section{RIS phase shift optimization}\label{ris_opt}
In this section, we derive the expression for the CRLB of the MSE of the received, quantized, and combined signal $\bold{y}$ for a fixed $\bold{W}_A^H$, $\bold{F}_A$, $\bold{\tilde{W}}_D^H$, $\bold{\tilde{F}}_D$, and ADC bit resolution $b$ on all the RF paths of the receiver, and show that the MSE achieves the CRLB. We later formulate an optimization problem to minimize the MSE (or CRLB) for RIS phase-shift setting. Finally, we describe a design to fine-tune the precoder $\bold{F}_S$ and the combiner $\bold{W}_S$ considering the optimal RIS phase-shift settings.

\subsection{CRLB as function of RIS phase-shift settings}\label{crlb}
Given the analog combiner $\bold{W}_A^H$, analog precoder $\bold{F}_A$, the partial digital combiner $\bold{\tilde{W}}_D^H$, and the partial digital precoder $\bold{\tilde{F}}_D$ are derived using \eqref{eq_10} and \eqref{eq_12}, we substitute them in \eqref{eq_3} and rewrite the same as
\begin{equation}\label{eq_13}
\begin{split}
\bold{y} = \bold{K}\bold{x} + \bold{n}_1,
\end{split}
\end{equation}
where $\bold{K} = \alpha\bold{W}_S\bold{\Phi}\bold{F}_S$, and $\bold{n}_1 = \alpha \bold{W}_S\bold{\tilde{W}}_D^H\bold{W}_A^H \bold{n} + \bold{W}_S\bold{\tilde{W}}_D^H\bold{n_q}$. We know that $\bold{n}$ and $\bold{n_q}$ are Gaussian random vectors such that $\bold{n} \sim \mathcal{N}(\bold{0},{\sigma_n^2}{\bold{I}_{N_r}})$ and $\bold{n_q} \sim \mathcal{N}(\bold{0},{\bold{D}_q^2})$ respectively.
%
%\begin{center}
%\begin{equation}\label{eq_14}
%\begin{split}
%\bold{n} \sim \mathcal{N}(\bold{0},{\sigma_n^2}{\bold{I}_{N_r}}), \bold{n_q} \sim \mathcal{N}(\bold{0},{\bold{D}_q^2}).
%\end{split}
%\end{equation}
%\end{center}
Hence we have
\begin{equation}\label{eq_15}
E[\bold{n_1}] = \alpha \bold{W}_S\bold{\tilde{W}}_D^H\bold{W}_A^H E[{\bold{n}}] + \bold{W}_S\bold{\tilde{W}}_D^H E[{\bold{n_q}}] = {\bold{0}},
\end{equation}
\begin{equation}\label{eq18a}
\begin{split}
%{\sigma_{n_1}^2} &= E[ (\bold{n_1} - E[\bold{n_1}])^2 ] = E[{\bold{n_1}}{\bold{n_1}}^H]\\
{\sigma_{n_1}^2} = E[{\bold{n_1}}{\bold{n_1}}^H] = \alpha^2\sigma_n^2\bold{W}\bold{W}^H +  \bold{W}_S\bold{\tilde{W}}_D^H\bold{D}_q^2\bold{\tilde{W}}_D\bold{W}_S^H.
%&= \alpha^2\sigma_n^2\bold{W}\bold{W}^H +  \bold{W}_S\bold{\tilde{W}}_D^H\bold{D}_q^2\bold{\tilde{W}}_D\bold{W}_S^H.
\end{split}
\end{equation} 
Thus $\bold{n_1} \sim \mathcal{N}(\bold{0},(\alpha^2\sigma_n^2\bold{W}\bold{W}^H +  \bold{W}_S\bold{\tilde{W}}_D^H\bold{D}_q^2\bold{\tilde{W}}_D\bold{W}_S^H))$.
%Thus, the statistics of $\bold{n_1}$ follows
%\begin{equation}\label{eq_16}
%\bold{n_1} \sim \mathcal{N}(\bold{0},(\alpha^2\sigma_n^2\bold{W}\bold{W}^H +  \bold{W}_S\bold{\tilde{W}}_D^H\bold{D}_q^2\bold{\tilde{W}}_D\bold{W}_S^H)).
%\end{equation}
It is noted that ${\bold{W}}$ is an $N \times N_r$ matrix with $N_r \gg N$. It is safe to assume that $\bold{W}$ has a full row rank and its pseudo-inverse exists.
Equation \eqref{eq_13} can be seen as a linear model, in which we intend to estimate $\bold{x}$, given the observation $\bold{y}$. We can express the conditional probability distribution of ${\bold{y}}$ given ${\bold{x}}$ as
\begin{equation}\label{eq_17}
p({\bold{y} \vert {\bold{x}}}) \sim \frac{1}{({2\pi}{{\sigma_{n_1}^2}})^{\frac{N}{2}}} \text{exp} \bigg\{ -\frac{1}{2{\sigma_{n_1}^2}} ({\bold{y}}-{\bold{K}}{\bold{x}})^H({\bold{y}}-{\bold{K}}{\bold{x}}) \bigg\}.
\end{equation}
From \eqref{eq_13} and \eqref{eq_17}, it is straightforward to see that the ``regularity conditions" are satisfied, and hence for such a linear estimator, we can write the expression for the CRLB as
\begin{equation}\label{eq_19_a}
\begin{split}
&{\bold{I}^{-1}({\bold{\hat{x}}})} = ({\bold{K}^H}{\bold{C}^{-1}}{\bold{K}})^{-1}\\
&= \bold{F}_S^{-1}\Big[ \sigma_n^2\bold{\bold{\Phi}^{-1}}\bold{\tilde{W}}\bold{\Phi}  + \frac{1}{\alpha^2} \bold{\Phi}^{-1}\bold{\tilde{W}}_D^H\bold{D}_q^2\bold{\tilde{W}}_D\bold{\Phi} \Big](\bold{F}_S^H)^{-1},\\
\end{split}
\end{equation}
where $\bold{\tilde{W}} = \bold{\tilde{W}}_D^H\bold{W}_A^H\bold{W}_A\bold{\tilde{W}}_D$, $\bold{D}_q^2 = \alpha(1-\alpha)\diag\Big[ {\bold{(\tilde{W}}_D^H})^{-1}\bold{\Phi}\bold{R}\bold{R}^H\bold{\Phi}^{-1}\bold{\tilde{W}}_D^{-1} + \bold{I}_{N} \Big]$, and C the noise covariance matrix of ${\bold{n_1}}$. The details of the proof are given in Appendix \ref{AppA}.\\
\indent It can also be seen that if the precoders, combiners, and the phase shift settings are designed such that $\bold{K} = \bold{I}_{N}$, the MSE in \eqref{eq_6a} achieves the CRLB. Formally,
\begin{equation}\label{eq_19}
\begin{split}
{\bold{I}^{-1}({\bold{\hat{x}}})} &= \alpha^2\sigma_n^2\bold{W}\bold{W}^H +  \bold{W}_S\bold{\tilde{W}}_D^H\bold{D}_q^2\bold{\tilde{W}}_D\bold{W}_S^H,\\
&= \alpha^2\sigma_n^2\bold{W}\bold{W}^H + \bold{W}_D^H\bold{D}_q^2\bold{W}_D,\\
&= \bold{M}(\bold{x}).
\end{split}
\end{equation}

\subsection{Design of the RIS phase shift matrix}\label{ris_dsgn}
Minimizing the CRLB in \eqref{eq_19} will ensure the minimum MSE ($\delta$) performance for a given fixed $\bold{W}_A^H$, $\bold{F}_A$, $\bold{\tilde{W}}_D^H$, $\bold{\tilde{F}}_D$, and ADC bit resolution $b$. The CRLB \eqref{eq_19} can be minimized when
\begin{equation}\label{eq_19b}
\begin{split}
\bold{\Phi}^{-1}\bold{\tilde{W}}_D^H\Big[ \sigma_n^2\bold{W}_A^H\bold{W}_A + \frac{1}{\alpha^2}\bold{D}_q^2 \Big]\bold{\tilde{W}}_D\bold{\Phi} = \bold{0}.
\end{split}
\end{equation}
Thus the design of the RIS phase shift matrix can be posed as
\begin{equation}\label{eq_21}
\begin{split}
\bold{\Phi}^{opt} = \argminF_{\bold{\Phi}}{\lVert \bold{\Phi}^{-1}\bold{\tilde{W}}_D^H\Big[ \sigma_n^2\bold{W}_A^H\bold{W}_A + \frac{1}{\alpha^2}\bold{D}_q^2 \Big]\bold{\tilde{W}}_D\bold{\Phi} \rVert }_F^2.
\end{split}
\end{equation}
It can also be shown further that minimizing \eqref{eq_19} is equivalent to maximizing the throughput, and energy-efficiency of the wireless link. Please refer to Appendix \ref{AppB} for the proofs. 
%{\Phi}^{opt} = {\lVert \diag(({{\tilde{W}}_D^H})^{-1}{\Phi}{G}{G}^H{\Phi}^{-1}{\tilde{W}}_D^{-1} + {I}_{N}) \rVert }_F^2

\subsection{Design of the partial digital precoder and combiner}\label{part_dsgn}
Now we revisit the design of the other partial digital precoder $\bold{F}_S$ and combiner $\bold{W}_S$. By substituting all the designed parameters into \eqref{eq_7}, we have
\begin{equation}\label{eq_23}
\begin{split}
\bold{K} = \bold{W}_S\bold{\Phi}^{opt}\bold{F}_S = \bold{I}_M.
\end{split}
\end{equation}
By appropriately selecting a matrix $\bold{F}_S^{opt}$ such that its right inverse $({\bold{F}_S^{opt}})^{\dagger}$ exists we can rewrite \eqref{eq_23} as
\begin{equation}\label{eq_24}
\begin{split}
{\bold{W}_S^{opt}} = ({\bold{F}_S^{opt}})^{\dagger}(\bold{\Phi}^{opt})^{-1}.
\end{split}
\end{equation}
It is to be noted that $(\bold{\Phi}^{opt})^{H} = (\bold{\Phi}^{opt})^{-1}$ and the inverse $(\bold{\Phi}^{opt})^{-1}$ always exists. In the next section we describe an algorithm to solve \eqref{eq_21}.

\section{Information-directed branch-and-prune Algorithm}\label{idbp_algo}
In this section, we detail the proposed IDBP algorithm, which belongs to the family of tree-traversal search methods. The IDBP is very different compared from the existing tree-based algorithms. In it, we use an information-theoretic measure to decide on the branching and pruning rules during tree traversal to arrive at an optimal sequence or solution in probability. It is also vastly different compared to the well-known branch-and-bound algorithms \cite{Bbound,BoydBnB}. The difference between the two mainly is that in IDBP, the pruning decisions are not based on bounds of the reward or cost of the optimal solution, instead, they are based on an information-theoretic measure. For the first time in the literature, we provide theoretical guarantees for near-optimality with the proposed IDBP algorithm using asymptotic equipartition theory.

\subsection{The problem setup}\label{InfMeas}
We model the solution $\bold{\Phi}$ as a sequence of random variables $\mathit{\Phi} = \{ \Phi_1, \Phi_2, \cdots, \Phi_M \}$, where we represent the discrete random variable $\Phi_i$ with probability mass function (PMF) $p(\Phi_i)$. The solution can be visualized as a finite horizon Markov decision process (MDP), which is defined using a tuple $(\Phi,\mathcal{A},p,r,q)$, where $\Phi$ denotes the finite set of states, $\mathcal{A}$ is the finite set of actions, $p:\Phi \times \mathcal{A} \times \Phi^{'} \rightarrow [0,1]$ are the state transition probabilities $p_{\phi,a}(\phi^{'})$ that a state $\phi^{'}$ is attained when an action $a \in \mathcal{A}$ is taken in state $\phi$ where $\phi,\phi^{'} \in \Phi$. A reward $r:\Phi \times \Phi \rightarrow \mathbb{R}$ is associated with an $a \in \mathcal{A}$ from a state $\phi \in \Phi$. The prior distribution $q$ represents the statistic of the optimal solution. We consider the actions $a \in \mathcal{A}$ to be deterministic given $p$ and $q$. We define a solution $\pi = \{\Phi_1=\phi_1, \Phi_2=\phi_2, \cdots, \Phi_M=\phi_M \}$ as a sequence of states attained as a consequence of decisions $a \in \mathcal{A}$ taken to maximize the cumulative reward in the MDP. The details about the MDP assumption for the solution to \eqref{eq_21} is detailed in Appendix \ref{AppC}. \\
\indent 
We design the IDBP algorithm with pruning rules so as to minimize the effective Kullback-Leibler (KL) divergence between the distribution of future looking sequence $\{ \Phi_{m+1}, \cdots, \Phi_M \}$ given $\Phi_m$ with respect to the known prior conditional distribution of the successive future states $q(\Phi_{m+1}, \Phi_{m+2}, \cdots, \Phi_M | \Phi_m)$. We call this algorithm the IDBP. Effectively, we can write \cite{Tishby_RL}
\begin{equation}\label{eq_bp_1}
\begin{split}
\pi^{opt} = \argminF_{\pi}\{D_{KL}(p(\Phi_1,\cdots,\Phi_{M})||q(\Phi_1,\cdots,\Phi_{M}))\}.
\end{split}
\end{equation}
Using the Asymptotic Equipartition Property (AEP), it can be shown that the solution $\pi^{opt}$ is optimal in probability. This is detailed in the next subsection. We say that the solution $\pi^{opt}$ is close to $\pi^*$ in probability when $Pr\big\{ \left\vert\ f(\pi^{opt}) - f(\pi^*) \right\vert \le \epsilon \big\} \ge 1-\delta$, where $\epsilon, \delta$ can be chosen arbitrarily close to zero. Here, $\pi^*$ is the optimal solution to \eqref{eq_21}. That is $\bold{\Phi}^{opt} = \diag(\pi^*)$. Here $f(\cdot)$ is the objective function of \eqref{eq_21}, defined in \eqref{eq_bp_4}.

\subsection{Optimality Analysis}\label{OptAnalysis}
In this subsection, we provide the proofs of Theorems \ref{thm1} - \ref{thm3}, and Lemma \ref{lemm-1} that establishes theoretical guarantees of the optimal solution in probability using the proposed IDBP Algorithm. Before laying out the details of the optimality analysis, we first describe one of the methods that can be used to derive the statistics $q$.

\subsubsection{Determination of the priors of the optimal solution $q$}\label{qeval}
Given that we model the solution as an MDP, we write the statistics of the optimal solution $\pi^*$ as $\pi^* \sim q(\Phi_1,\Phi_2,\cdots,\Phi_M)$, where $q(\Phi_1,\Phi_2,\cdots,\Phi_M) = q(\Phi_1)q(\Phi_2| \Phi_1) \cdots q(\Phi_M| \Phi_{M-1})$. Here $q(\Phi_1)$ is initial state distribution. We assume that the MDP is homogenous and hence it is sufficient to determine the transition probabilities $q(\Phi_{t+1} = \phi_i | \Phi_t = \phi_j)$ between any two consecutive stages $t$ and $t+1, \forall t \in [1,M); \phi_i, \phi_j \in \Phi$. To do so, we identify $m$ solutions $\{ \pi^i \}_{i=1}^{m}$ from the exhaustive search space of problem \eqref{eq_21} such that $f(\pi^1) \le f(\pi^2) \le \cdots \le f(\pi^{m})$, where $f(\cdot)$ is the objective of the problem \eqref{eq_21} written as $f(\pi) = $
\begin{equation}\label{eq_bp_4}
\begin{split}
%&f(\pi) = \\
{\Big\lVert {\diag(\pi)}^{-1}\bold{\tilde{W}}_D^H\Big[ \sigma_n^2\bold{W}_A^H\bold{W}_A + \frac{1}{\alpha^2}\bold{D}_q^2 \Big]\bold{\tilde{W}}_D\diag(\pi) \Big\rVert }_F^2.
%& {\Big\lVert \diag\Big(({\bold{\tilde{W}}_D^H})^{-1}{\diag(\pi)}\bold{G}\bold{G}^H{\diag(\pi)}^{-1}\bold{\tilde{W}}_D^{-1} + \bold{I}_{N}\Big) \Big\rVert }_F^2
\end{split}
\end{equation}
%where $N_1 << |B_{set}|$
Using these $m$ subset of solutions we evaluate
\begin{equation}\label{eq_bp_5}
\begin{split}
q(\Phi_{t+1} &= \phi_i | \Phi_{t}=\phi_j) = \frac{F(\{ \Phi_{t+1}=\phi_i | \Phi_{t}=\phi_j \})}{mM}\\
&\forall t \in [1,M) ; \phi_i, \phi_j \in \Phi,
\end{split}
\end{equation}
Here $F(\{ \Phi_{t+1}=\phi_i | \Phi_{t}=\phi_j \})$ returns the number of times the event $\{ \Phi_{t+1}=\phi_i | \Phi_{t}=\phi_j \}$ occur among the $m$ solutions. It follows that if $\pi^* \in \{ \pi^i \}_{i=1}^{m}$, and for a small $m$ we have $q(\pi^*) \rightarrow 1$.\\
Alternatively, one can also use other fast non-parametric techniques or heuristic approaches to estimate the conditional priors $q$ \cite{qevalm,qevalNN}.\\
\indent We know that MDP $\mathit{\Phi} = \{ \Phi_1, \Phi_2, \cdots, \Phi_M \}$ can be visualized as homogenous Markov source, and exhibits asymptotic equipartition property. In addition, we also know the following \cite{Thomas, StrTyp}
\newtheorem{definition}{Definition}
\begin{definition}\label{def1}
A sequence $\pi_n$ (or a solution of length $n$) is strongly $\delta$ typical with respect to the distribution $q$ if $\forall \phi \in \Phi : |q_{\pi_n}(\phi) - q(\phi) | \le \delta q(\phi)$.\\
\end{definition}

Here $q_{\pi_n}(\phi) = \frac{\ell(\phi)}{n}$ is the empirical distribution signifying the number of occurrences of $\phi$ denoted as $\ell(\phi)$ over $n$ observations.
\begin{definition}\label{def2}
The strongly $\delta$-typical set, $\mathcal{T}_{\delta}^n(\Phi)$ is a set of all strongly $\delta$ typical sequences. That is 
\begin{equation}\label{eq_def2}
\begin{split}
\mathcal{T}_{\delta}^n(\Phi) = \Big\{ \pi_n : \Big| q_{\pi_n}(\phi) - q(\phi) \Big| \le \delta q(\phi) \Big\}.
\end{split}
\end{equation}
\end{definition}

\begin{definition}\label{def3}
The weakly $\epsilon$-typical set, $A_{\epsilon}^n(\Phi)$ is a set of all sequences such that
\begin{equation}\label{eq_def3}
\begin{split}
A_{\epsilon}^n(\Phi) = \Big\{ \pi_n : \Big| -\frac{1}{n} \log q(\pi_n) - H(\Phi) \Big| \le \epsilon \Big\},
\end{split}
\end{equation}
\end{definition}
where $H(\Phi)$ is the source entropy rate of the MDP under consideration.
%We modify the above definition to incorporate the conditional priors $q(\Phi_{t+1} &= \phi_i | \Phi_{t}=\phi_j)$, and show that the solution $\pi^*$ belongs to the strong $\delta$-typical set $\mathcal{T}_{\delta}^n(\Phi)$ w.r.t the prior $q(\Phi_{t+1} &= \phi_i | \Phi_{t}=\phi_j)$.
We now modify the Definition \ref{def1} to incorporate the conditional priors $q(\Phi_{t+1} = \phi_i | \Phi_{t}=\phi_j)$, and show that the solution $\pi_n$ belongs to $A_{\eta}^n(\Phi)$, for some $\eta \to 0$, when the following condition $|q_{\pi_n}(\phi_i|\phi_j) - q(\phi_i|\phi_j) | \le \delta q(\phi_i|\phi_j)$ is satisfied. 
\newtheorem{theorem}{Theorem}
\begin{theorem}\label{thm1}
A sequence $\pi_n$ is $\eta$ typical with respect to the conditional distribution $q$ if $\forall \phi_i, \phi_j \in \Phi : |q_{\pi_n}(\phi_i|\phi_j) - q(\phi_i|\phi_j) | \le \delta q(\phi_i|\phi_j)$, for some $\eta, \delta \to 0$.\\
\end{theorem}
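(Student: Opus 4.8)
The plan is to connect the strong typicality condition on *conditional* empirical distributions (the hypothesis) to weak typicality (membership in $A_\eta^n(\Phi)$), which by Definition \ref{def3} is a statement about the normalized log-probability $-\frac{1}{n}\log q(\pi_n)$ being close to the entropy rate $H(\Phi)$. The bridge between the two is that for a homogeneous Markov source, $q(\pi_n)$ factors as a product of the initial distribution times transition probabilities, so $\log q(\pi_n)$ becomes a sum over the transitions appearing in $\pi_n$, and the number of times each transition $(\phi_j \to \phi_i)$ appears is exactly what the empirical conditional distribution $q_{\pi_n}(\phi_i|\phi_j)$ counts.

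**First I would** write out the factorization $q(\pi_n) = q(\phi_1)\prod_{t=1}^{n-1} q(\Phi_{t+1}|\Phi_t)$, take the negative normalized logarithm, and regroup the sum over transition \emph{types} rather than over time indices. Letting $\ell(\phi_i,\phi_j)$ denote the number of times the pair $(\Phi_t=\phi_j,\Phi_{t+1}=\phi_i)$ occurs, I get
\begin{equation}\label{eq_thm1_1}
-\frac{1}{n}\log q(\pi_n) = -\frac{1}{n}\log q(\phi_1) - \frac{1}{n}\sum_{\phi_i,\phi_j} \ell(\phi_i,\phi_j)\log q(\phi_i|\phi_j).
\end{equation}
The entropy rate of a homogeneous Markov source is $H(\Phi) = -\sum_{\phi_i,\phi_j} q(\phi_j)\,q(\phi_i|\phi_j)\log q(\phi_i|\phi_j)$, so the goal reduces to showing that the empirical transition counts $\tfrac{1}{n}\ell(\phi_i,\phi_j)$ are close to $q(\phi_j)q(\phi_i|\phi_j)$ up to terms controlled by $\delta$.

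**The key step** is to invoke the hypothesis $|q_{\pi_n}(\phi_i|\phi_j) - q(\phi_i|\phi_j)| \le \delta\, q(\phi_i|\phi_j)$, combined with strong typicality of the state marginal (so that $\tfrac{1}{n}\ell(\phi_j) \to q(\phi_j)$ as well). Writing $\tfrac{1}{n}\ell(\phi_i,\phi_j) = \tfrac{\ell(\phi_j)}{n}\, q_{\pi_n}(\phi_i|\phi_j)$ and substituting both approximations into \eqref{eq_thm1_1}, each summand deviates from the true $q(\phi_j)q(\phi_i|\phi_j)\log q(\phi_i|\phi_j)$ by a multiple of $\delta$. Summing over the finitely many pairs $(\phi_i,\phi_j)$ and absorbing the vanishing initial-state term $\tfrac{1}{n}\log q(\phi_1)$, I obtain $\big|{-\tfrac{1}{n}\log q(\pi_n)} - H(\Phi)\big| \le \eta$ for an $\eta$ that is a finite constant (depending on the alphabet size $K$ and $\max|\log q(\phi_i|\phi_j)|$) times $\delta$, hence $\eta \to 0$ as $\delta \to 0$. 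This places $\pi_n \in A_\eta^n(\Phi)$, which is the claim.

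**The main obstacle** I anticipate is handling the state-marginal empirical distribution cleanly: the conditional-typicality hypothesis alone does not immediately control $\tfrac{1}{n}\ell(\phi_j)$, so I would either assume (or derive, using stationarity of the homogeneous chain together with Definition \ref{def1} applied to the single-symbol marginal) that the states are themselves strongly typical, and then propagate both error bounds simultaneously. Care is also needed to ensure the $\delta$-factor does not blow up when $q(\phi_i|\phi_j)$ is small, but since the deviation bound is multiplicative in $q(\phi_i|\phi_j)$ and this same factor multiplies $\log q(\phi_i|\phi_j)$ in the entropy-rate sum, the products remain uniformly bounded and the aggregate error stays $O(\delta)$.
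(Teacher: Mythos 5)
Your proposal is correct and follows essentially the same route as the paper: factor $q(\pi_n)$ over the transitions, take the normalized logarithm, add and subtract the expected counts, and bound the resulting deviation from the entropy rate by $\delta$ times an entropy-like constant, giving $\hat{\eta}\to 0$ as $\delta\to 0$. The only substantive difference is that the paper defines the empirical conditional $q_{\pi_n}(\phi_i|\phi_j)$ in \eqref{eq_bp_5_1} as the joint transition count divided by $n$ rather than by $\ell(\phi_j)$, so the hypothesis directly controls $\tfrac{1}{n}\ell(\phi_i,\phi_j)-q(\phi_i|\phi_j)$ and the state-marginal typicality issue you flag as your main obstacle never arises in the paper's bookkeeping (at the cost of your version being the more standard, and arguably more careful, notion of conditional typicality).
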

\begin{proof}
We have $q_{\pi_n}(\phi_i|\phi_j)$ the empirical conditional distribution of the sequence $\pi_n$ defined as
\begin{equation}\label{eq_bp_5_1}
\begin{split}
q_{\pi_n}(\Phi_{t+1} &= \phi_i | \Phi_{t}=\phi_j) = q_{\pi_n}(\phi_i | \phi_j) = \frac{\ell(\{ \phi_i | \phi_{j}\};\pi_n)}{n},\\
&\forall t \in [1,n) ; \phi_i, \phi_j \in \Phi,
\end{split}
\end{equation}
where $\ell(\{ \phi_i | \phi_{j}\};\pi_n)$ denotes the number of occurrences of the transitions $\phi_i$ to $\phi_j$ in the sequence $\pi_n$. Let the sequence $\pi_n = \{ \phi_{t(1)}, \phi_{t(2)}, \cdots, \phi_{t(n)} \}$, where $\phi_{t(i)} \in \Phi, \forall i \in [1,n]$. Then we have 
\begin{equation}\label{eq_bp_6}
\begin{split}
q(\pi_n) &= q(\phi_{t(1)})^{\ell(\phi_{t(1)};\pi_n)} \prod_{i=2}^{n-1} q(\phi_{t(i+1)} | \phi_{t(i)})^{\ell(\phi_{t(i+1)} | \phi_{t(i)};\pi_n)},
\end{split}
\end{equation}
where $\ell(\phi_{t(1)};\pi_n)$ is the number of occurrences of the state $\phi_{t(1)}$ in the sequence (solution) $\pi_n$. We write \eqref{eq_bp_6} as
\begin{equation}\label{eq_bp_7}
\begin{split}
\log(q(\pi_n)) &= \ell(\phi_{t(1)};\pi_n)\log q(\phi_{t(1)})\\
&+ \sum_{i=2}^{n-1} \ell(\{ \phi_{t(i+1)} | \phi_{t(i)} \};\pi_n) \log q(\phi_{t(i+1)} | \phi_{t(i)})
\end{split}
\end{equation}
For simplicity of notation, we represent the conditionals $\{ \phi_{t(i+1)} | \phi_{t(i)} \}$ as $\psi_{i}$, $\phi_{t(1)}$ as $\psi_1$, and $\ell(\phi_{t(1)};\pi_n)$ as $\ell(\psi_1)$  We simplify \eqref{eq_bp_7} further as
\begin{equation}\label{eq_bp_8}
\begin{split}
\log q(\pi_n) &= \sum_{i=1}^{n-1} \ell(\psi_i) \log q(\psi_i),\\
&= \sum_{i=1}^{n-1} \Big\{ \ell(\psi_i) -nq(\psi_i) +nq(\psi_i) \Big\}\log q(\psi_i),\\
&= n \sum_{i=1}^{n-1} q(\psi_i)\log q(\psi_i)\\
&+ n \sum_{i=1}^{n-1} \Big( \frac{1}{n}\ell(\psi_i) - q(\psi_i) \Big) \log q(\psi_i),\\
&= -n \{ H(\Phi) + \eta \}
\end{split}
\end{equation}
where $H(\Phi) = H(\Phi_1) + \sum_{i=2}^{N-1} H(\Phi_{i+1}|\Phi_i)$ for the MDP under consideration \cite{Thomas}, and 
\begin{equation}\label{eq_bp_9}
\begin{split}
\eta &= \sum_{i=1}^{n-1} \Big( \frac{1}{n}\ell(\psi_i) - q(\psi_i) \Big)(-\log q(\psi_i)),\\
&\le \sum_{i=1}^{n-1} \Big| \frac{1}{n}\ell(\psi_i) - q(\psi_i) \Big| (-\log q(\psi_i)).
\end{split}
\end{equation}
We know that $\Big| \frac{1}{n}\ell(\psi_i) - q(\psi_i) \Big| = \Big| q_{\pi_n}(\phi_i|\phi_j) - q(\phi_i|\phi_j) \Big|$ for $i \in [1,n]$, and hence we have
%\begin{equation}\nonumber
%\begin{split}
%\eta \le \delta \sum_{i=1}^{n-1} q(\psi_i) (-\log q(\psi_i)) = \Big| \delta H(\Phi) \Big|,\text{ or}
%\end{split}
%\end{equation}
%
\begin{equation}\label{eq_bp_10}
\begin{split}
\eta &\le \delta \sum_{i=1}^{n-1} q(\psi_i) (-\log q(\psi_i)) = \Big| \delta H(\Phi) \Big|,\text{ or}\\
%&\le \Big| \delta H(\Phi) \Big|,\\
&\le \hat{\eta}, \text{ where }\hat{\eta} = \Big| \delta H(\Phi) \Big|.
\end{split}
\end{equation}
It is straightforward to see that for a finite $N$, $\hat{\eta} \to 0$ as $\delta \to 0$. Hence we can write \eqref{eq_bp_9} as
\begin{equation}\label{eq_bp_11}
\begin{split}
\Big(H(\Phi) - \hat{\eta} \Big) \le -\frac{1}{n} \log q(\pi_n) \le \Big(H(\Phi) + \hat{\eta} \Big)
\end{split}
\end{equation}
\end{proof} 
We now show that the optimal sequence $\pi^* \in A_{\epsilon}^M(\Phi)$.

\begin{theorem}\label{thm2}
Let $q$ be the conditional priors derived using the $m$-best sequences $\{ \pi^i \}_{i=1}^m$ as described in \eqref{eq_bp_5} that accurately represent the optimal solution $\pi^*$, then $\pi^* \in A_{\epsilon}^M(\Phi)$.
\end{theorem}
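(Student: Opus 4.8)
The plan is to reduce Theorem \ref{thm2} to Theorem \ref{thm1} applied to the single sequence $\pi_n = \pi^*$ with $n = M$. The essential content of the hypothesis that ``$q$ accurately represents $\pi^*$'' is that the empirical conditional distribution of $\pi^*$ itself is multiplicatively close to the estimated prior $q$; that is, for some arbitrarily small $\delta$,
\begin{equation}
\big| q_{\pi^*}(\phi_i|\phi_j) - q(\phi_i|\phi_j) \big| \le \delta\, q(\phi_i|\phi_j), \quad \forall\, \phi_i,\phi_j \in \Phi.
\end{equation}
First I would establish this strong-typicality condition directly from the construction of $q$ in \eqref{eq_bp_5}: since $q$ is formed from the empirical transition frequencies of the $m$-best sequences and $\pi^*$ is faithfully represented by this collection, the per-transition statistics of $\pi^*$ agree with $q$ up to the multiplicative slack $\delta$.

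Second, with this condition in hand, I would invoke Theorem \ref{thm1} verbatim with $n=M$ and $\pi_n=\pi^*$. Its conclusion \eqref{eq_bp_11} reads $\big(H(\Phi)-\hat{\eta}\big) \le -\tfrac{1}{M}\log q(\pi^*) \le \big(H(\Phi)+\hat{\eta}\big)$ with $\hat{\eta} = |\delta H(\Phi)|$, which is exactly $\big| -\tfrac{1}{M}\log q(\pi^*) - H(\Phi)\big| \le \hat{\eta}$. Setting $\epsilon := \hat{\eta}$ and recalling that for finite $M$ one has $\hat{\eta} \to 0$ as $\delta\to 0$, I can make $\hat{\eta}$ no larger than any prescribed $\epsilon$ by choosing $\delta$ small enough. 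Comparing with Definition \ref{def3} then yields $\pi^* \in A_{\epsilon}^M(\Phi)$, which is the claim.

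The main obstacle is the first step, namely making rigorous why $\pi^*$ is strongly typical with respect to the \emph{estimated} $q$, given that the empirical counts in \eqref{eq_bp_5} aggregate transitions from all $m$ sequences rather than from $\pi^*$ alone. The load-bearing fact is the remark following \eqref{eq_bp_5}: when $\pi^* \in \{ \pi^i \}_{i=1}^m$ and $m$ is small, the prior concentrates so that $q(\pi^*)\to 1$. This concentration simultaneously forces the transition frequencies along $\pi^*$ to dominate the empirical distribution, so that the displayed multiplicative bound holds, and drives the entropy rate $H(\Phi)$, and hence $\hat{\eta} = |\delta H(\Phi)|$, toward zero, so the typicality gap closes. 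I would therefore state the representativeness of the $m$-best set as the precise standing assumption and carry out the $q(\pi^*)\to 1$ argument carefully, since it is what converts the qualitative phrase ``accurately represents'' into the quantitative typicality bound that Theorem \ref{thm1} consumes.
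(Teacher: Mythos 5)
Your two-step skeleton is exactly the paper's: establish the multiplicative closeness $\bigl| \hat{q}_{\pi^*}(\phi_i|\phi_j) - q(\phi_i|\phi_j) \bigr| \le \delta\, q(\phi_i|\phi_j)$ and then feed it to Theorem \ref{thm1} with $n=M$ to land in $A_{\epsilon}^M(\Phi)$ via \eqref{eq_bp_11} and Definition \ref{def3}. Where you diverge is in how the first step is discharged. The paper does not route through the concentration remark $q(\pi^*)\to 1$; instead it \emph{defines} ``accurately represents'' as the near-equality of the per-sequence empirical conditionals across the $m$-best set, $\hat{q}_{\pi^1}(\phi_{j+1}|\phi_j) \approx \cdots \approx \hat{q}_{\pi^m}(\phi_{j+1}|\phi_j) \approx \hat{q}_{\pi^*}(\phi_{j+1}|\phi_j)$ in \eqref{eq_T2_6}, and then the first step becomes a one-line averaging identity: summing the per-sequence counts gives $\sum_{i=1}^m \hat{q}_{\pi^i} = F/M$ as in \eqref{eq_T2_5}, so $m\,\hat{q}_{\pi^*} \approx F/M$ and hence $\hat{q}_{\pi^*} \approx F/(mM) = q$ by \eqref{eq_bp_5}. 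That formalization converts the qualitative hypothesis into the quantitative bound without any concentration argument, which is the cleaner route; your $q(\pi^*)\to 1$ mechanism is heavier and, taken literally, only controls the conditionals along transitions actually occurring in $\pi^*$ (for absent transitions the multiplicative bound forces $q(\phi_i|\phi_j)=0$, a subtlety the paper's definition sidesteps by assuming agreement of the empirical conditionals directly). Your further observation that concentration would drive $H(\Phi)$ and hence $\hat{\eta}=|\delta H(\Phi)|$ to zero is not needed: Theorem \ref{thm1} already yields $\hat{\eta}\to 0$ as $\delta\to 0$ for any fixed finite entropy. So: correct structure, but you would save yourself the hardest part of your plan by adopting the paper's definition of representativeness rather than deriving typicality from $q(\pi^*)\to 1$.
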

\begin{proof}
Let the $m$-best sequences be denoted as
\begin{equation}\label{eq_T2_1}
\begin{split}
\pi^i = \{ \phi_1^i, \phi_2^i, \cdots, \phi_M^i \},\text{ where }\phi_j^i \in \Phi, \forall j \in [1,M].
\end{split}
\end{equation}
we now have the empirical distribution of the sequences as 
\begin{equation}\label{eq_T2_2}
\begin{split}
&\hat{q}(\pi^i) = \hat{q}_{\pi^i}(\phi_{1}^i) \prod_{j=2}^{M-1} \hat{q}_{\pi^i}(\phi_{j+1}|\phi_{j}),\\
&\text{where }\hat{q}_{\pi^i}(\phi_{1}^i) = \frac{\ell(\phi_{1}^i; \pi^i)}{M} = \frac{\ell(\{ \phi_{1}^i|\phi_{0} \}; \pi^i)}{M},\\
&\hat{q}_{\pi^i}(\phi_{j+1}^i|\phi_{j}^i) = \frac{\ell (\{ \phi_{j+1}^i|\phi_{j}^i \}; \pi^i)}{M-1}.
\end{split}
\end{equation}
It is also worth noting that the starting transition $\phi_{0}$ to $\phi_{1}^i$ occurs only once in the sequence. Hence in general we can write 
\begin{equation}\label{eq_T2_3}
\begin{split}
\hat{q}_{\pi^i}(\phi_{j+1}^i|\phi_{j}^i) = \frac{\ell (\{ \phi_{j+1}^i|\phi_{j}^i \}; \pi^i)}{M}.
\end{split}
\end{equation}
However from \eqref{eq_bp_5} we have
\begin{equation}\label{eq_T2_4}
\begin{split}
q(\phi_{j+1}|\phi_{j}) = \frac{F(\phi_{j+1}|\phi_{j})}{mM}.
\end{split}
\end{equation}
Since $F(\phi_{j+1}|\phi_{j})$ is the number of occurrences of the transitions $\phi_{j}$ to $\phi_{j+1}$ in all the $m$ sequences, we can rewrite \eqref{eq_T2_3} as
\begin{equation}\label{eq_T2_5}
\begin{split}
\sum_{i=1}^{m} \hat{q}_{\pi^i}(\phi_{j+1}^i|\phi_{j}^i) &= \frac{1}{M} \sum_{i=1}^m \ell (\phi_{j+1}^i|\phi_{j}^i; \pi^i) = \frac{1}{M} F(\phi_{j+1}|\phi_{j}).
%&= \frac{1}{M} F(\phi_{j+1}|\phi_{j}).
\end{split}
\end{equation}
We say that the empirical priors $\hat{q}$ is an accurate representation of the optimal sequence $\pi^*$ if 
\begin{equation}\label{eq_T2_6}
\begin{split}
\hat{q}_{\pi^1}(\phi_{j+1}|\phi_{j}) &\approx \hat{q}_{\pi^2}(\phi_{j+1}|\phi_{j}) \approx \cdots \approx \hat{q}_{\pi^m}(\phi_{j+1}|\phi_{j})\\
&\approx \hat{q}_{\pi^*}(\phi_{j+1}|\phi_{j}) \forall j \in [1,M-1]. 
\end{split}
\end{equation}
substituting \eqref{eq_T2_6} in \eqref{eq_T2_5} we have
\begin{equation}\nonumber
\begin{split}
m \hat{q}_{\pi^*}(\phi_{j+1}|\phi_{j}) & \approx \frac{1}{M} F(\phi_{j+1}|\phi_{j}),\\
\hat{q}_{\pi^*}(\phi_{j+1}|\phi_{j}) & \approx \frac{1}{mM} F(\phi_{j+1}|\phi_{j}),
\end{split}
\end{equation}
\begin{equation}\label{eq_T2_7}
\begin{split}
%&m \hat{q}_{\pi^*}(\phi_{j+1}|\phi_{j}) & \approx \frac{1}{M} F(\phi_{j+1}|\phi_{j}),\\
%\hat{q}_{\pi^*}(\phi_{j+1}|\phi_{j}) & \approx \frac{1}{mM} F(\phi_{j+1}|\phi_{j}),\\
\hat{q}_{\pi^*}(\phi_{j+1}|\phi_{j}) \approx q(\phi_{j+1}|\phi_{j}), \forall j \in [1,M-1].
\end{split}
\end{equation}
From \eqref{eq_T2_7} we can write
\begin{equation}\label{eq_T2_8}
\begin{split}
&\Big| \hat{q}_{\pi^*}(\phi_{j+1}|\phi_{j}) - q(\phi_{j+1}|\phi_{j}) \Big| \le \delta q(\phi_{j+1}|\phi_{j}),\\
&\forall j \in [1,M-1], \text{ and for some }\delta \to 0.
\end{split}
\end{equation}
Now using Theorem \ref{thm1} we can write $\pi^* \in A_{\epsilon}^M(\Phi)$ w.r.t conditional $q$; if the statistic $q$ is a close representation of the optimal solution $\pi^*$.
\end{proof}
Using the proposed IDBP algorithm we find another sequence $\pi^p$ as a solution, drawn from a conditional distribution $p(\phi_i | \phi_j), \forall t \in [1,M) ; \phi_i, \phi_j \in \Phi$ such that $p(\pi^p) \approx q(\pi^*)$, and $D_{KL}(p||q) \to 0$. We now show that the sequences $\pi^p, \pi^* \in A_{\eta}^n(\Phi)$ w.r.t the conditional $q$ for some $\eta \to 0$.

\begin{theorem}\label{thm3}
Let $\pi^p$ be a sequence obtained using the conditional distribution $p(\phi_i|\phi_j)$ such that $p_{\pi^p}(\phi_i|\phi_j) \approx q_{\pi^*}(\phi_i|\phi_j), \phi_i, \phi_j \in \Phi$, and $D_{KL}(p||q) \to 0$; then it can be shown that the sequence $\pi^p$ and $\pi^*$ belong to the typical set w.r.t the conditional $q$. That is $\pi^p, \pi^* \in A_{\eta}^M(\Phi)$ for some $\eta \to 0$.
\end{theorem}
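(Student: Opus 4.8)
The plan is to reduce the entire statement to Theorem~\ref{thm1}, which already establishes the crucial implication: any length-$M$ sequence whose empirical conditional distribution lies within a multiplicative factor $\delta$ of the true conditional prior $q$ automatically satisfies $\bigl(H(\Phi)-\hat\eta\bigr)\le -\frac{1}{M}\log q(\pi_M)\le \bigl(H(\Phi)+\hat\eta\bigr)$ with $\hat\eta=|\delta H(\Phi)|$, i.e.\ it belongs to $A_{\hat\eta}^M(\Phi)$. Hence it suffices, for each of $\pi^*$ and $\pi^p$ separately, to verify the hypothesis $|q_{\pi_M}(\phi_i|\phi_j)-q(\phi_i|\phi_j)|\le \delta\,q(\phi_i|\phi_j)$ for all $\phi_i,\phi_j\in\Phi$ and some $\delta\to 0$, and then to invoke Theorem~\ref{thm1}. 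For $\pi^*$ this verification is exactly the content of Theorem~\ref{thm2}: equation \eqref{eq_T2_8} supplies $|\hat{q}_{\pi^*}(\phi_{j+1}|\phi_j)-q(\phi_{j+1}|\phi_j)|\le\delta\,q(\phi_{j+1}|\phi_j)$, so $\pi^*\in A_\eta^M(\Phi)$ with respect to $q$ follows at once. I would therefore cite Theorem~\ref{thm2} for that half of the claim and concentrate the work on $\pi^p$.

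For $\pi^p$ I would build the required closeness to $q$ by a triangle-inequality chain. The empirical conditional distribution of $\pi^p$ is simply its transition-count distribution $p_{\pi^p}(\phi_i|\phi_j)$; by hypothesis this is within a $\delta_1$-multiple of the empirical conditional distribution $q_{\pi^*}(\phi_i|\phi_j)$ of $\pi^*$ (the same object denoted $\hat{q}_{\pi^*}$ in Theorem~\ref{thm2}), while \eqref{eq_T2_8} places $q_{\pi^*}(\phi_i|\phi_j)$ within a $\delta_2$-multiple of the true prior $q(\phi_i|\phi_j)$. Combining these two estimates gives $|p_{\pi^p}(\phi_i|\phi_j)-q(\phi_i|\phi_j)|\le \delta\,q(\phi_i|\phi_j)$ uniformly in $(\phi_i,\phi_j)$ for some $\delta\to 0$, which is precisely the hypothesis of Theorem~\ref{thm1} applied to $\pi^p$. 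Running this through the estimate \eqref{eq_bp_8}--\eqref{eq_bp_11} then yields $\pi^p\in A_{\hat\eta}^M(\Phi)$ with $\hat\eta\to 0$, completing the second half.

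The role of the assumption $D_{KL}(p\|q)\to 0$ is to guarantee that the generating law $p$ of $\pi^p$ genuinely converges to $q$, so that the empirical matching $p_{\pi^p}\approx q_{\pi^*}$ is a consequence of the IDBP construction in \eqref{eq_bp_1} rather than a standalone assumption; I would make this dependence explicit so that the chain above is well founded. The main obstacle I anticipate is the bookkeeping of error accumulation: I must ensure that the two independent $\approx$ relations collapse into a single multiplicative bound holding uniformly over the finite alphabet $\Phi$, and that summing the per-transition deviations over the finite horizon $M$ (as in \eqref{eq_bp_9}--\eqref{eq_bp_10}) keeps $\hat\eta$ proportional to $\delta H(\Phi)$ rather than letting it scale with $M$. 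Since $\Phi$ is finite and $M$ is fixed, finiteness of $H(\Phi)$ closes the argument and $\hat\eta\to 0$ as $\delta\to 0$, exactly as in the proof of Theorem~\ref{thm1}.
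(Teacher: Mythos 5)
Your proposal is correct and follows essentially the same route as the paper: invoke Theorem~\ref{thm2} (via \eqref{eq_T2_8}) to place $\pi^*$ in the typical set, transfer that multiplicative bound to $p_{\pi^p}$ using the hypothesis $p_{\pi^p}\approx q_{\pi^*}$, and then apply Theorem~\ref{thm1} to both sequences with a common tolerance. Your explicit triangle-inequality bookkeeping simply spells out the step the paper compresses into ``(using \eqref{eq_T3_1})'' before taking $\eta=\max(\delta,\delta')$.
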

\begin{proof}
We have
\begin{equation}\label{eq_T3_1}
\begin{split}
&p_{\pi^p}(\phi_i|\phi_j) \approx q_{\pi^*}(\phi_i|\phi_j), \forall \phi_i, \phi_j \in \Phi, D_{KL}(p||q) \to 0.
\end{split}
\end{equation}
From Theorem \ref{thm2}, we have $\pi^* \in A_{\epsilon}^M(\Phi)$, and using Theorem \ref{thm1} we can write
\begin{equation}\label{eq_T3_2}
\begin{split}
&\Big| q_{\pi^*}(\phi_i|\phi_j) - q(\phi_i|\phi_j) \Big| \le \delta q(\phi_i|\phi_j),\text{ or}\\
%\end{split}
%\end{equation}
%%Using \eqref{eq_T3_1} we can rewrite \eqref{eq_T3_2} as 
%\begin{equation}\label{eq_T3_3}
%\begin{split}
&\Big| p_{\pi^p}(\phi_i|\phi_j) - q(\phi_i|\phi_j) \Big| \le \delta^{'} q(\phi_i|\phi_j).\text{ (using \eqref{eq_T3_1})}
\end{split}
\end{equation}
where $\delta^{'} \to 0$. For some $\eta = \max(\delta, \delta^{'})$, we can write the following 
\begin{equation}\label{eq_T3_4}
\begin{split}
\Big| p_{\pi^p}(\phi_i|\phi_j) - q(\phi_i|\phi_j) \Big| &\le \eta q(\phi_i|\phi_j),\\
\Big| q_{\pi^*}(\phi_i|\phi_j) - q(\phi_i|\phi_j) \Big| &\le \eta q(\phi_i|\phi_j),
\end{split}
\end{equation}
where $\eta \to 0$, and $\forall \phi_i, \phi_j \in \Phi$. Hence we have $\pi^p, \pi^* \in A_{\eta}^M(\Phi)$.
\end{proof}
Finally, it follows that if $\pi^p \in A_{\eta}^M(\Phi)$ w.r.t the conditional $q$, which is a close representation of $\pi^*$, then $\pi^p$ is optimal solution in probability.

\newtheorem{Lemma}{Lemma}
\begin{Lemma}\label{lemm-1}
%\begin{theorem}\label{thm4}
If $\pi^p, \pi^* \in A_{\eta}^M(\Phi)$ w.r.t the conditionals $q$, and if $q$ is a close representation of the optimal solution $\pi^*$ we have $Pr\Big\{ \left\vert\ f(\pi^{opt}) - f(\pi^*) \right\vert \le \epsilon \Big\} \ge 1-\delta$, where $\epsilon$, $\delta$ are very small numbers not related to $\eta$.
%\end{theorem}
\end{Lemma}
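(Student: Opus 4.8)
The plan is to turn the two information-theoretic facts already established — that $\pi^p,\pi^*\in A_\eta^M(\Phi)$ with respect to $q$ (Theorem \ref{thm3}) and that $q$ is a faithful representation of $\pi^*$ — into the stated probabilistic near-optimality by invoking the two standard consequences of the AEP for the homogeneous Markov source $\Phi$: the equipartition of $q$-probabilities inside $A_\eta^M(\Phi)$, already recorded as \eqref{eq_bp_11}, and the fact that the typical set carries almost all of the probability mass. The missing link, from ``typical'' to ``near-optimal in $f$'', is supplied by the construction of $q$ in \eqref{eq_bp_5}, in which the prior is built from the $m$ lowest-$f$ sequences, so that large $q$-mass is tied to small objective value.

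First I would recall the weak-AEP mass bound for a Markov source: $\Pr_q\{A_\eta^M(\Phi)\}\ge 1-\delta$ with a failure probability $\delta$ governed by $M$ and the source statistics and unrelated to the typicality parameter $\eta$. Because the IDBP solution $\pi^{opt}=\pi^p$ is drawn from $p$ with $D_{KL}(p\|q)\to 0$ (Theorem \ref{thm3}), the same bound transfers to the generating law of $\pi^p$, giving $\Pr\{\pi^p\in A_\eta^M(\Phi)\}\ge 1-\delta$. Next I would use \eqref{eq_bp_11}: both $\pi^p$ and $\pi^*$ lie in $A_\eta^M(\Phi)$, so $-\tfrac1M\log q(\cdot)$ is pinned to $H(\Phi)$ within $\hat\eta$ for each, whence $q(\pi^p)$ and $q(\pi^*)$ differ by at most the factor $2^{2M\hat\eta}\to 1$ and thus $q(\pi^p)\approx q(\pi^*)$.

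It then remains to convert this probabilistic closeness into closeness of the objective. Here I would use the representation hypothesis quantitatively: the $m$-best sequences entering \eqref{eq_bp_5} have objective values in a vanishing band above the minimum $f(\pi^*)$, and, as noted after \eqref{eq_bp_5}, $q(\pi^*)\to 1$ for small $m$, which collapses the typical set toward $\pi^*$. Consequently any $\pi^p\in A_\eta^M(\Phi)$ inherits the transition statistics of these near-optimal sequences and can differ from $\pi^*$ in only a vanishing fraction of coordinates, so $|f(\pi^p)-f(\pi^*)|\le\epsilon$ for an $\epsilon$ controlled by the representation accuracy of $q$ — again independent of $\eta$. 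Combining this deterministic bound on the typical set with the mass bound yields $\Pr\{|f(\pi^{opt})-f(\pi^*)|\le\epsilon\}\ge\Pr\{\pi^p\in A_\eta^M(\Phi)\}\ge 1-\delta$, as claimed.

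The hard part will be making this last conversion rigorous: typicality is a statement purely about $q$-probability, and forcing it to imply closeness of $f$ is not automatic but rests entirely on the representation hypothesis and on the smoothness of the map $\pi\mapsto f(\pi)$ defined in \eqref{eq_bp_4}. The cleanest way I see to close the gap is to lean on the concentration $q(\pi^*)\to 1$, which drives $H(\Phi)\to 0$, shrinks $A_\eta^M(\Phi)$ toward the single point $\pi^*$, and thereby makes $f(\pi^p)\to f(\pi^*)$ deterministically on the typical set, leaving the AEP mass bound to supply the probability $1-\delta$ and to keep $\epsilon,\delta$ decoupled from $\eta$.
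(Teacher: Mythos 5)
Your proposal follows essentially the same route as the paper's own (very short) proof: both arguments rest on the observation that, because $q$ concentrates on $\pi^*$ and both $\pi^p$ and $\pi^*$ are $\eta$-typical w.r.t.\ $q$, the typical set collapses onto $\pi^*$, so $\pi^p \to \pi^*$ and hence $f(\pi^{opt}) \approx f(\pi^*)$ with high probability. The gap you honestly flag --- that typicality is a statement about $q$-probability and does not by itself control the objective $f$ without invoking the representation hypothesis --- is precisely the step the paper also elides (its entire proof is ``$p(\pi^p)=p(\pi^*)\approx 1$, or $\pi^p\to\pi^*$; hence we can safely write'' the conclusion), so your more explicit version is, if anything, the more careful of the two.
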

\begin{proof}
Since we have $\pi^p, \pi^* \in A_{\eta}^M(\Phi)$, we have
\begin{equation}\label{eq_T4_1}
\begin{split}
&p(\pi^p) = p(\pi^*) \approx 1,\text{ or }\pi^p \to \pi^*; \text { Hence we can safely write}\\
&Pr\Big\{ \left\vert\ f(\pi^{opt}) - f(\pi^*) \right\vert \le \epsilon \Big\} \ge 1-\delta.
\end{split}
\end{equation}
\end{proof}

\section{Algorithm Description}\label{AlgoDes}
\begin{figure}[t!]
\centering
\includegraphics[scale=0.28]{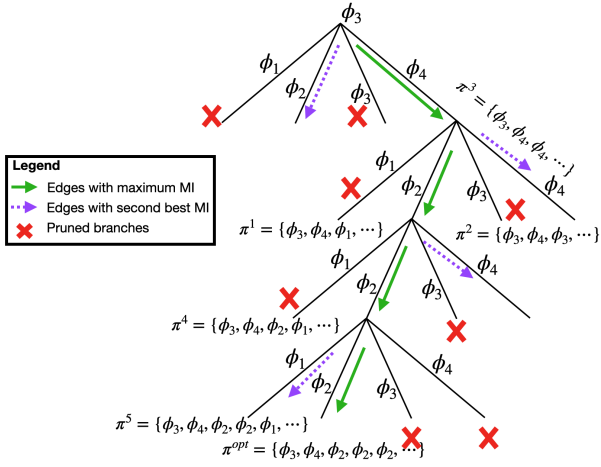}%Earlier 0.34
\caption{\small An illustration of the tree traversal using the proposed IDBP Algorithm.}
\label{fig2_bp}
\end{figure}
Inspired by the well-known Chow-Liu Algorithm (CLA), we develop the proposed IDBP algorithm to arrive at the solution $\pi^{opt}$ \cite{CLAlgo}. The CLA minimizes the KL divergence between the actual distribution represented using the conditional priors $q$ and the distribution of $\pi^{opt}$. It finds the best second-order product approximation of the multi-dimensional discrete probability distribution from a finite set of observed data. The CLA finds the optimal tree-structured network $T(X_1,X_2,\cdots,X_k)$ of depth $k$ by minimizing the KL divergence between the observed (actual) distribution $p_t(X_1,X_2,\cdots, X_k)$ and the tree-structured distribution $T(X_1,X_2,\cdots,X_k)$. That is 
\begin{equation}\label{eq_bp_2}
\begin{split}
\min_{T}\{D_{KL}(p_t(X_1,\cdots, X_k)||T(X_1,\cdots,X_k))\},
\end{split}
\end{equation}
where $\{ X_1,X_2,\cdots, X_k \}$ is a sequence of random variables.
One of the key results from \cite{CLAlgo} is that, for minimizing the KL divergence in \eqref{eq_bp_2}, it is sufficient to find a tree network $T$ such that we maximize the mutual information (MI) $I(X_i,X_{\gamma(i)})$ between the tree edges in $T$. Here $X_{\gamma(i)}$ denotes the parent of $X_i$ in the tree under consideration.\\
\indent The proposed IDBP algorithm maximizes the MI between the tree edges (branches) to select the optimal-path edges and prune others. This ensures optimal solution in probability to \eqref{eq_21}. It is also worth noting that since we have an MDP model for our solution, it suffices to consider a second-order approximation for the joint probability distribution. Given the prior statistics $q$ of the optimal solution, and the transition probabilities $p$ between the phase settings, we traverse the tree by maintaining the edges that maximize the MI $I(X_i, X_{\gamma(i)})$. The proposed IDBP algorithm is described using Algorithm \ref{Algo1}. The algorithm yields an optimal solution in probability $\pi^{opt}$ if the priors $q$ selected is a close representation of the optimal solution $\pi^*$. In such a situation, the proposed IDBP algorithm requires a single pass tree traversal to get to the solution $\pi^{opt}$. This is the best case. However, in situations when $q$ is not an accurate representation of $\pi^*$, we propose to use a second pass from every node visited to traverse the tree along with the second-best child. This is described in Algorithm \ref{Algo1}. One can choose to extend the algorithm to explore $k$-best children. An illustration of the proposed IDBP tree search is shown in Fig. \ref{fig2_bp}. However, when extended to all the children, the algorithm becomes an exhaustive search. The process of designing the hybrid precoder, hybrid combiner, and the RIS phase configuration is outlined as design flow in Algorithm \ref{desflow}.
\begin{minipage}{0.48\textwidth}
\renewcommand*\footnoterule{}
\begin{savenotes}
\begin{algorithm}[H]
  \caption{Design flow}\label{desflow}
  \begin{algorithmic}[1]
   \small
      \Procedure{Design flow}{}
      	 \State $\{{\bold{F}_A^{opt}},{\bold{\tilde{F}}_D^{opt}}\}  \gets \text{ using \eqref{eq_10}}$%\comment{ Obtain analog precoder and partial digital precoder.}
	 \State $\{{{\bold{W}_A^H}^{opt}},{\bold{\tilde{W}}_D}^{H^{opt}}\}  \gets \text{ using \eqref{eq_12} }$%\comment{ Obtain analog combiner and partial digital combiner.}
	 \State $\bold{\Phi}^{opt}  \gets \text{ by solving \eqref{eq_21} using IDBP}$
	 \State $\{({\bold{F}_S^{opt}}), ({\bold{W}_S^{opt}})\}  \gets \text{ using \eqref{eq_23} and \eqref{eq_24} }$%\comment{ Fine-tune the digital precoder and combiner design.} 
	 \State{\textbf{return} ${\{ \bold{F}_A^{opt}},{\bold{W}_D^{opt}},\bold{\Phi}^{opt},\bold{\Phi}^{opt},{\bold{F}_S^{opt}},{\bold{W}_S^{opt}} \}$}
  \EndProcedure
  \end{algorithmic}
\end{algorithm}
%
%\begin{minipage}{0.48\textwidth}
%\renewcommand*\footnoterule{}
%\begin{savenotes}
\begin{algorithm}[H]
  \caption{Proposed IDBP}\label{Algo1}
  \begin{algorithmic}[1]
%  \SetAlgoLined
   \small
      \Function{IDBP}{$\Phi$,$M$,$m$}
      	 \State $\Phi  \gets \text{Finite set of phase angles with cardinality }K$
         \State $M  \gets \text{Number of RIS elements}$
      	\State $m \gets \text{ Number of sequences used to derive the priors }q$
        \State{$\text { InitializeStack() }$}
        \State{$\pi^{opt} \gets \emptyset; C_{opt} \gets \infty$}
        \State{$q \gets \text{ Compute the priors as described in Section \ref{qeval}}$}
        \State{$p \gets \text{ Compute the initial state probabilities}$}
        \State{$X_0 \gets \text{ Compute using $p$ and $q$}$}
         \State{$c \gets \text{ Compute initial cost using $p$ and $q$}$}
       \State{$\pi^{opt} \gets \text{ TraverseTree ($X_0$,$c$,$p$,$q$,$M$,2,1) }$}
       \State{\textbf{return} $\{ \pi^{opt} \}$} \Comment{Solution}  
       \EndFunction
       %\newline
       %
       % Begin function description
       %
       \Function{TraverseTree}{$X_{\text{curr}}$,$c$,$p$,$q$,$M$,stage,rec}
                \State $X_{\text{curr}}  \gets \text{ Current node in the tree}$
      		\State $r \gets \text{ Accumulated cost up till the node } X_{\text{curr}}$
      		\State{$q \gets \text{ The conditional priors}$}
      		\State{$p \gets \text{ The transition probabilities }$}
      		\State $M  \gets \text{Number of RIS elements}$
      		\State $\text{stage }  \gets \text{The current stage(level) in the tree traversal}$
      		\State $\text{rec }  \gets \text{Indicator to control recursion}$
    		\If { $\text{stage } > M$}
        	    		\State{ Get the traversed sequence and its}
			\State{ accumulated cost}
            		\State{$\{ \pi^p, f(\pi^p) \} \gets \text{ ReadStack() }$} \Comment{refer \eqref{eq_bp_4}.}
            		\If { $f(\pi^p) \le C_{opt}$}{
            			\State{$\pi^{opt} \gets \pi^p$}
				\State{$C_{opt} \gets f(\pi^p)$}
            		\EndIf
	   		\State{$\text{pop() and }$}
            		\Return
      		\EndIf
      		\State{$\{ Xc_1, Xc_2, Cc_1, Cc_2\} \gets \text{ findBestChildren}(X_{\text{curr}},c,p)$}
      		\State{$\text{Push($Xc_1,Cc_1$,stage)}$}
      		\State{$\text{TraverseTree($Xc_1,Cc_1$,$p$,$q$,$M$,stage$+1$,rec)}$}
      		\If {$\text{rec } = 1$}
      			\State{$\text{Push($Xc_2,Cc_2$,stage)}$}
      			\State{$\text{TraverseTree($Xc_2,Cc_2$,$p$,$q$,$M$,stage$+1$,0)}$}
      		\EndIf
      		\State{$\text{pop() and }$}
      		\Return  
      \EndFunction
      %\newline
      %
      % Function - findBestChildren ()
      %
      \Function{findBestChildren}{$X_{\text{curr}}$,$c$,$p$}
      	 \State $X_{\text{curr}}  \gets \text{ The current node in the tree being processed}$
         \State $c  \gets \text{ Running cost of the sequence}$
      	 \State $p \gets \text{ The transition statistics}$
	 \For{ each child $X_{i}$ of the current node $X_{\text{curr}}$}
		\State{$c(i) \gets c(i) +  p(X_{i},X_{\text{curr}})\log_2 \frac{p(X_{i},X_{\text{curr}})}{p(X_{i})p(X_{\text{curr}})}$}
         \EndFor
        \State{$\{ I_1, I_2 \} \gets argmax(c)$}
        \State{ Return two best children and their running cost.}
        \State{\textbf{return} $\{ X_{I_1}, X_{I_2}, c(I_1), c(I_2) \}$} 
      \EndFunction  
  }       
 % \EndProcedure
 \end{algorithmic}
\end{algorithm}
\end{savenotes}
\end{minipage}

% Computational analysis
\section{Computational complexity analysis}\label{CCA}
\begin{figure}[b!]
\centering
%\includegraphics[scale=0.25]{BnPAlgo_3.eps}%Earlier 0.34
%\caption{\small An illustration of the tree traversal using the proposed IDBP Algorithm.}
%\label{fig2_bp}
\includegraphics[scale=0.25]{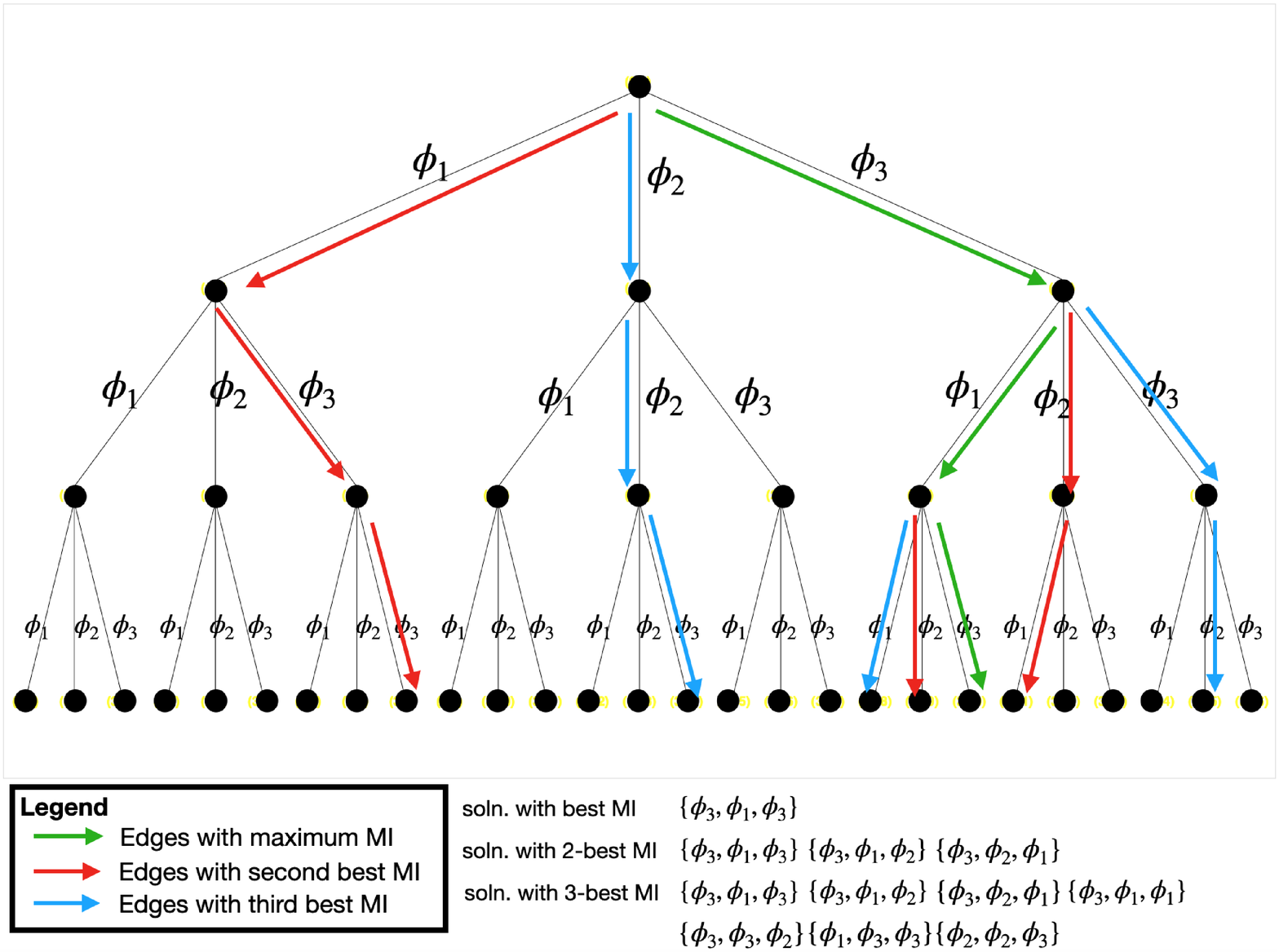}%Earlier 0.42 for both
\caption{An illustration of the path (solutions) explored when using a single-pass, $2-$best, and $3-$best children traversal.}
\label{treecomp}
\end{figure}
The algorithm yields an optimal solution in probability $\pi^{opt}$ if the priors $q$ selected is a close representation of the optimal solution $\pi^*$. In such a situation, the proposed IDBP algorithm requires a single-pass tree traversal to get to the solution $\pi^{opt}$. This is the best case. However, when $q$ is not an accurate representation of $\pi^*$, additional solutions can be explored using a second pass from every node visited by traversing the tree along the second-best child. Although following the path along second-best child recursively explores more solutions, it is easy to see that this increases the complexity exponentially in $M$, having a time complexity of $\approx O(2^M)$. The algorithm will turn out to be an ES if one has to follow $K-$best paths recursively having a complexity of  $O(K^M)$. Alternatively, we propose to follow $k-best$ children, but not recursively. A $2-$best children solution exploration in a non-recursive fashion is described in Algorithm \ref{Algo1}. One can choose to extend this algorithm to explore $k$-best children. This is illustrated using Fig.\ref{treecomp}.\\
\indent A single-pass tree traversal to get to the solution $\pi^{opt}$ has a complexity of $O(\mu KM)$, where $M$ is the number of RIS elements (also the depth of the tree under consideration). The term $\mu$ is the number of arithmetic operations required to compute the MI between the current node and one of its children. Hence to compute the MI between a given node and all its children, the number of arithmetic operations required is $\mu K$, where $K$ is the cardinality of $\Phi$. When exploring additional solutions using a second pass from every node visited (in a non-recursive fashion) to traverse the tree along the second-best child, the number of nodes to be processed is $M + 1 + 2 + \cdots + M-1 = \frac{M(M+1)}{2}$, and hence has a complexity of $O(\mu K M^2)$. This is illustrated in Fig.\ref{treecomp}. Similarly, when we consider solutions from the $3-$best children along the best-child path, the number of nodes to be processed is $M + 2 + 4 + \cdots + 2(M-1) = M + 2\frac{M(M-1)}{2}$, which again has $O(\mu K M^2)$ complexity. In general, solutions considering $k-$best children along the best-child path have a complexity of $O(\mu K k M^2)$. Extending the result to explore $K-$best solutions from the best path still has a polynomial-time computational complexity of $O(\mu K^2M^2)$. On average, with a prior $q$ selected to have a close statistics of the optimal solution $\pi^*$, the proposed IDBP algorithm yields an optimal solution in probability $\pi^{opt}$ with a complexity of $O(\mu K^2 M^2)$.\\
One of the many ways to identify the priors $q$ to have a good representation of $\pi^*$ is to use a fast heuristic algorithm to identify $\{ \pi^i \}_{i=1}^m$ discussed in subsection \ref{qeval} \cite{SimAn2}. It is to be noted that this computational complexity does not include the evaluation of the conditional priors $q$ described in \ref{qeval}. The priors $q$ can be evaluated with significantly reduced computation using random sampling (with $m \ll M$) or heuristics methods \cite{qevalm,qevalNN}.\\
\indent The TMH algorithm proposed in \cite{RisASyed} requires the computation of the matrix $\bold{K}$, and finding its eigenvector that corresponds to its maximum eigenvalue as described using (11) and (12) in Section III-A of \cite{RisASyed}. The resultant eigenvector quantized to the nearest possible discrete angles yields the solution. To compute the matrix $\bold{K}$ the effective number of multiplications are $N_t N_r M^2$. Finding the required eigenvector has a complexity of $O(M^3)$, assuming no structure about the matrix $\bold{K}$, which is a reasonable assumption. This results in the computational complexity of TMH to be $O(M^3)$.\\
\indent The complexity of the reflecting schemes \textit{eMSER} and \textit{vMSER} proposed in \cite{JointRISPrec} is shown to be $\approx O(L^{2N}K^2)$. Here $L$ corresponds to the $L-ary$ QAM symbols used. The discussion is summarized in the Table \ref{compcca}.
\begin{table}[!htb]
%\begin{table}[b!]
\begin{center}
%\begin{sc}
\resizebox{0.95\columnwidth}{!}{%
%\begin{tabu} to 0.1\textwidth {| l| l| l|}
\begin{tabu} to 0.2\textwidth {|c|c|c|}
 \hline
 \textbf{\small Algorithm}  & \textbf{\small  Computational complexity} & \textbf{\small  Matlab runtime* for $M=12$} \\
 % & & \textbf{\tiny for $M=12$(sec)} \\
 \hline
 \footnotesize  ES & \footnotesize  $O(K^M)$ & \footnotesize  415.5 \\
 \hline
 \footnotesize  Proposed IDBP & \footnotesize  $\approx O(KM)$ $^{\S}$ & \footnotesize 18.3 \\
 \hline
 \footnotesize  Proposed IDBP & \footnotesize  $\approx O(K^2M^2)$ $^{\dagger}$ & \footnotesize 18.8 \\
 \hline
\footnotesize  TMH$^{\lozenge}$ &  \footnotesize  $\approx O(M^3)$ & \footnotesize 56 \\
 \hline
\footnotesize  AO1$^{\lozenge}$ \textit{(eMSER-Reflecting)} &  \footnotesize  $\approx O(L^{2N}K^2) / O(L^{2N}K^3)$ & \footnotesize  228 \\
 \hline 
 \footnotesize  AO2$^{\lozenge}$ \textit{(eMSER-Reflecting)} &  \footnotesize $\approx O(L^{2N}K^2) / O(L^{2N}K^3)$ & \footnotesize 345 \\
 \hline 
\end{tabu}}\\
\end{center}
\footnotesize{$^{\S}$ conditional priors $q$ is a a close representation of the solution $\pi^*$,\\
$^{\dagger}$ conditional priors $q$ not a close representation of the solution $\pi^*$.\\
$^{\lozenge}$ refer to Section \ref{Sim} (Simulations).\\
$*$The matlab runtime (in secs.) includes precoder, combiner, RIS evaluations, and prior evaluation at a given SNR.}
\caption{$\text{\small Computational complexity comparison.}$}\label{compcca}
%\vspace{-1in}
\end{table}

% Simulations
\section{Simulations}\label{Sim}
% M = 12
\begin{figure}[b!]
\centering
\begin{subfigure}[b]{.5\textwidth}
\includegraphics[scale=0.42]{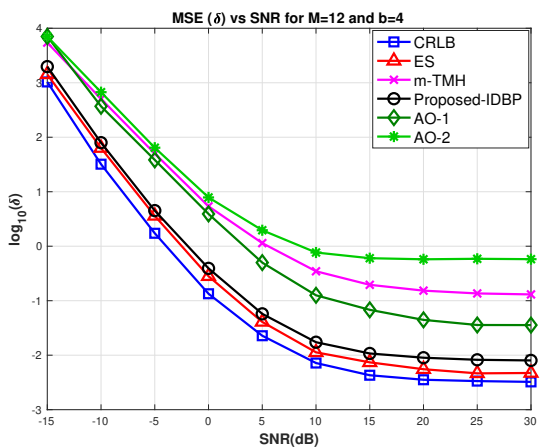}
\subcaption{\scriptsize MSE performance.}\label{fig_m12_ES}
\end{subfigure}%\qquad
\vfill %%
\begin{subfigure}[b]{.5\textwidth}
\includegraphics[scale=0.42]{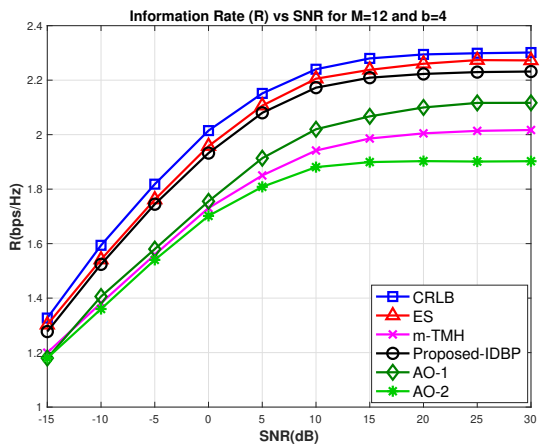}
\subcaption{\scriptsize Information rate performance.}\label{fig_r12_ES}
\end{subfigure}
\caption{\footnotesize MSE and information rate at various SNRs with proposed IDBP,  TMH, AO, and the ES method with the number of RIS elements $M=12$ for ADC bits $b=4$ on all RF paths.}\label{fig_12_ES}
\end{figure}
%%%% New %%%%
%\begin{figure*}[b!]
%\centering
%% M = 12
%\begin{minipage}[b]{.43\textwidth}
%\includegraphics[scale=0.42]{RIS_12_mse_R1.eps}
%\subcaption{\scriptsize MSE performance.}\label{fig_m12_ES}
%\end{minipage}\qquad
%\begin{minipage}[b]{.4\textwidth}
%\includegraphics[scale=0.42]{RIS_12_R_R1.eps}
%\subcaption{\scriptsize Information rate performance.}\label{fig_r12_ES}
%\end{minipage}
%\caption{\footnotesize MSE and information rate at various SNRs with proposed IDBP,  TMH, AO, and the ES method with the number of RIS elements $M=12$ for ADC bits $b=4$ on all RF paths.}\label{fig_12_ES}
%\end{figure*}

In this section, we first compare the following algorithms- (i) the exhaustive search (ES) method to solve the \eqref{eq_21}, (ii) the proposed IDBP algorithm to solve \eqref{eq_21} (IDBP), (iii) the exhaustive search to solve the trace maximization (TM) framework considering the diagonal RIS architecture proposed in \cite{RisASyed} (m-TMH), and the AO algorithm proposed in \cite{JointRISPrec}. The evaluation of the ES for RIS elements when $M > 12$, and with the phase-shift settings $K \ge 3$ becomes impractical. Hence, for this evaluation, we only consider the case where $M = 12$ with the ADC bit resolution set to $b=4$ on all the RF paths of the receiver. The other configurations parameters used for this evaluation are presented in Table \ref{RisASyedTable}. The channel model for $\bold{P}$ and $\bold{R}$ are derived using the multi-user interference model discussed in Section  \ref{sigmod} considering eight ($\beta = 8$) strong RIS reflected interference and one non-RIS reflected interferer. The detailed analysis of such a multi path propagation environment is described in Section II-B of \cite{RisASyed}. The AO algorithm encompasses the combiner in addition to precoder and RIS that is discussed in \cite{JointRISPrec}. The algorithm is described in Appendix \ref{AppD}. The convergence of the AO algorithm is strongly dependent on the selection of the initial solutions and hence we consider two scenarios of AO with different initial solutions (AO1 and AO2). The initial solutions for AO1 and AO2 are chosen empirically. We run the simulations considering the above parameters to evaluate the MSE, using which we compute the information rate of the link as
\begin{equation}\label{eq_ir}
\begin{split}
R(\bold{\Phi}) =N\log_2 p + \log_2\det \Big ( (\bold{M}(\bold{x}))^{-1} + \frac{1}{p}\bold{I}_{N} \Big).
\end{split}
\end{equation}
The proof of \eqref{eq_ir} is detailed in Appendix \ref{AppB}. The simulation results obtained are shown in Fig. \ref{fig_12_ES}. From Fig. \ref{fig_12_ES}, it can be observed that the ES achieves the CRLB for the given (designed) hybrid precoders and combiners. The proposed IDBP algorithm, which is a computationally efficient method to solve \eqref{eq_21}, extracts a near-optimal solution that is close to ES and has a superior performance compared to both the trace-maximization algorithm (m-TMH) proposed in \cite{RisASyed}, and AO1 and AO2 based on \cite{JointRISPrec}. %It should also be noted that the effect of channel estimation imperfections is not considered in this work. A perfect channel state information is assumed to be available both at the transmitter and the receiver.\\
\indent Subsequently, we run simulations with $M = 64, 128, \text{ and }256$ to compare the following algorithms- (i) the proposed IDBP algorithm to solve \eqref{eq_21} (IDBP), (ii) optimal trace maximization (TMH) method called the diagonal $\bold{\Phi}$ (OPT-DIAG) \cite{RisASyed}, and (iii) alternating optimization (AO1) based on the work in \cite{JointRISPrec}. The TMH is a computationally efficient algorithm to solve the trace maximization proposed in \cite{RisASyed}. The details of this algorithm are presented in the Section III-A of \cite{RisASyed}. We evaluate the MSE and the information rate $R$ for SNRs in the range $[-30, 30]$ dB in steps of 5dB and for ADC bits $b=2,3,\text{ and }4$ on all the RF paths. The results obtained are shown using the Fig.\ref{fig_64}, Fig.\ref{fig_128}, and Fig.\ref{fig_256} for $M=64, 128,\text{ and }256$, respectively. From the results, it can be observed that the proposed IDBP algorithm outperforms both the TMH and the AO methods.\\
%%%% Table %%%%
%\begin{table}[!htb]
%\begin{table}[!htb]
\begin{table}[H]
\begin{center}
%\begin{sc}
\resizebox{0.95\columnwidth}{!}{%
\begin{tabu} to 0.5\textwidth {| l| l| }
 \hline
 \textbf{Parameters}  & \textbf{Value/Type} \\
 \hline
Frequency & 28Ghz \\
\hline
Environment & Non Line of sight (NLOS) \\
\hline
Tx-Rx seperation & 100m\\
\hline
Tx-RIS seperation & 70m\\
\hline
RIS-Rx seperation & 70m\\
\hline
TX/RX array type & ULA\\
\hline 
Num of TX/RX elements $N_t$/$N_r$ & 48/48\\
\hline
TX/RX  antenna spacing & $\lambda/2$\\
\hline
Number of Passive RIS elements ($M$) & 12,64,128,256\\
\hline
Number of discrete phase settings ($K$) & $\{ \frac{25\pi}{36}, \frac{73\pi}{36},\frac{49\pi}{36}\}$ \\
\hline
ADC bit resolution on all RF paths ($b$) & 2,3,4 \\
\hline
Number of RF paths at TX and RX ($N$) & 8 \\
\hline
Signal bandwidth & 100Mhz \\
\hline
Sampling Frequency &  400Mhz \\
\hline
Modulation Type &  64 QAM \\
\hline
Number of symbols &  200 \\
\hline
Number of interferer paths ($\beta$)  &  8 \\
\hline
\end{tabu}}
%\end{sc}
%\vspace{1mm}
\caption{$\text{\footnotesize The configuration parameters used for our simulations.}$}\label{RisASyedTable}
%\vspace{-1in}
\end{center}
\end{table}
%
%%%%%%%%%%%
% M = 12
%\begin{figure}[b!]
%\centering
%\begin{subfigure}[b]{.4\textwidth}
%%\includegraphics[scale=0.38]{RIS_12_mse_ES.eps}
%\includegraphics[scale=0.42]{RIS_12_mse_R1.eps}
%\subcaption{\scriptsize MSE performance.}\label{fig_m12_ES}
%\end{subfigure}%\qquad
%\vfill %%
%\begin{subfigure}[b]{.4\textwidth}
%%\includegraphics[scale=0.38]{RIS_12_R_ES.eps}
%\includegraphics[scale=0.42]{RIS_12_R_R1.eps}
%\subcaption{\scriptsize Information rate performance.}\label{fig_r12_ES}
%\end{subfigure}
%\caption{\footnotesize MSE and information rate at various SNRs with proposed IDBP,  TMH, AO, and the ES method with the number of RIS elements $M=12$ for ADC bits $b=4$ on all RF paths.}\label{fig_12_ES}
%\end{figure}
%
%%%%% Figs %%%%%%%%%%%%%%%%
\begin{figure*}[t!]
\centering
% M = 64
\begin{minipage}[b]{.43\textwidth}
\includegraphics[scale=0.42]{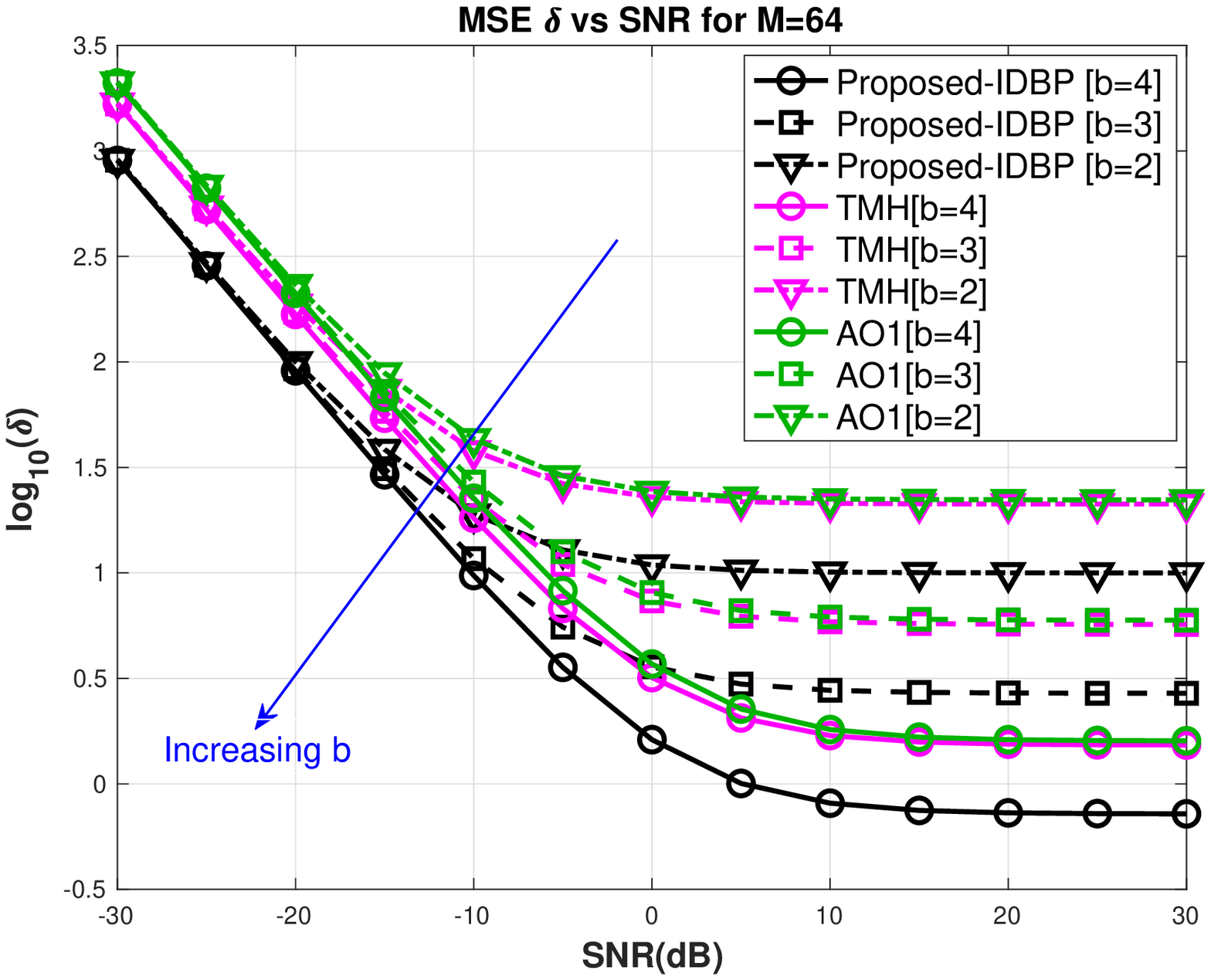}
\subcaption{\scriptsize MSE performance.}\label{fig_m64}
\end{minipage}\qquad
\begin{minipage}[b]{.4\textwidth}
\includegraphics[scale=0.42]{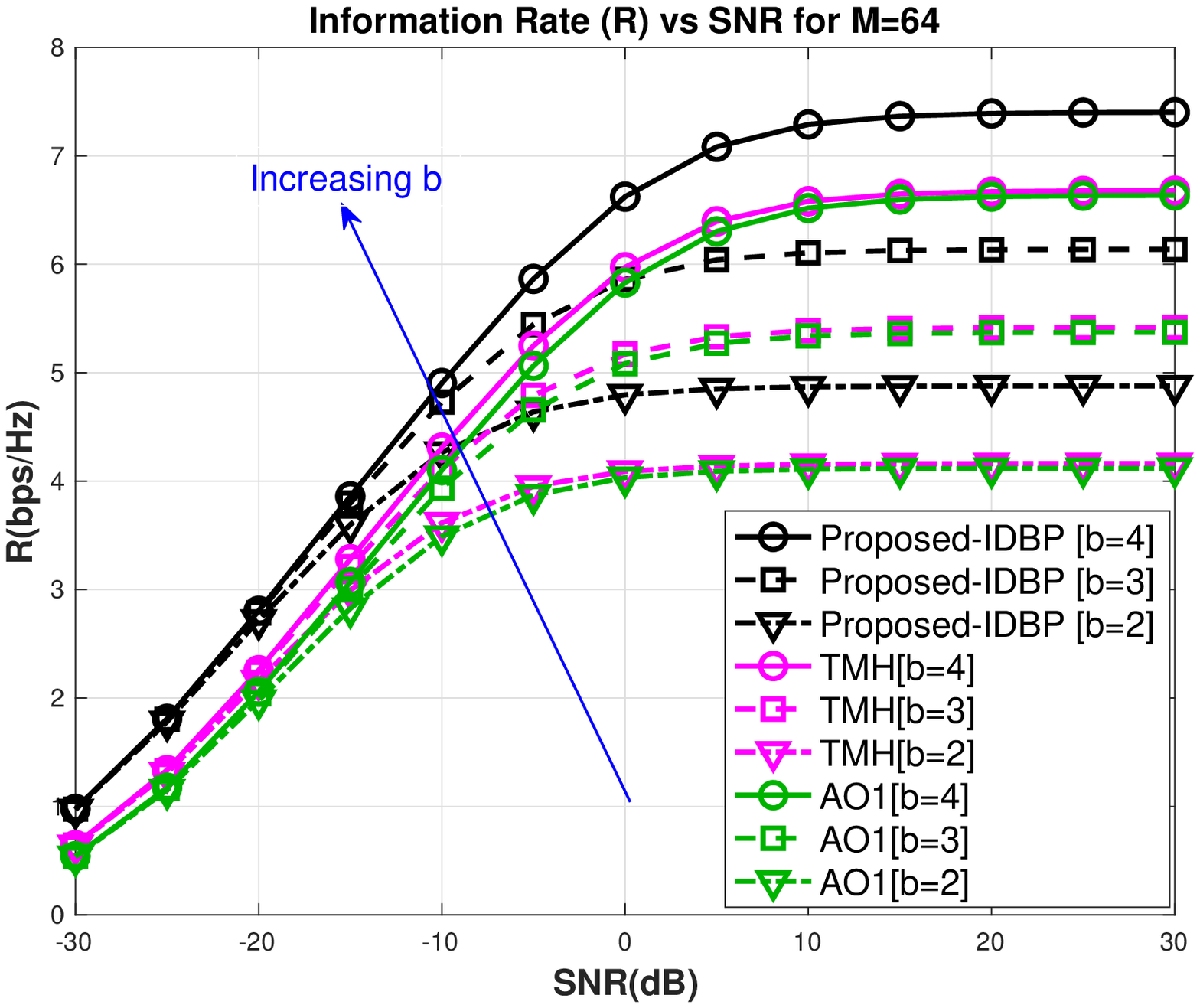}
\subcaption{\scriptsize Information rate performance.}\label{fig_r64}
\end{minipage}
\caption{\footnotesize MSE and information rate at various SNRs with proposed IDBP, TMH, and AO algorithms with the number of RIS elements $M=64$, and for $b-$bit ADC in all of the receiver paths.}\label{fig_64}
\bigskip
% M = 128
\begin{minipage}[b]{.43\textwidth}
\includegraphics[scale=0.42]{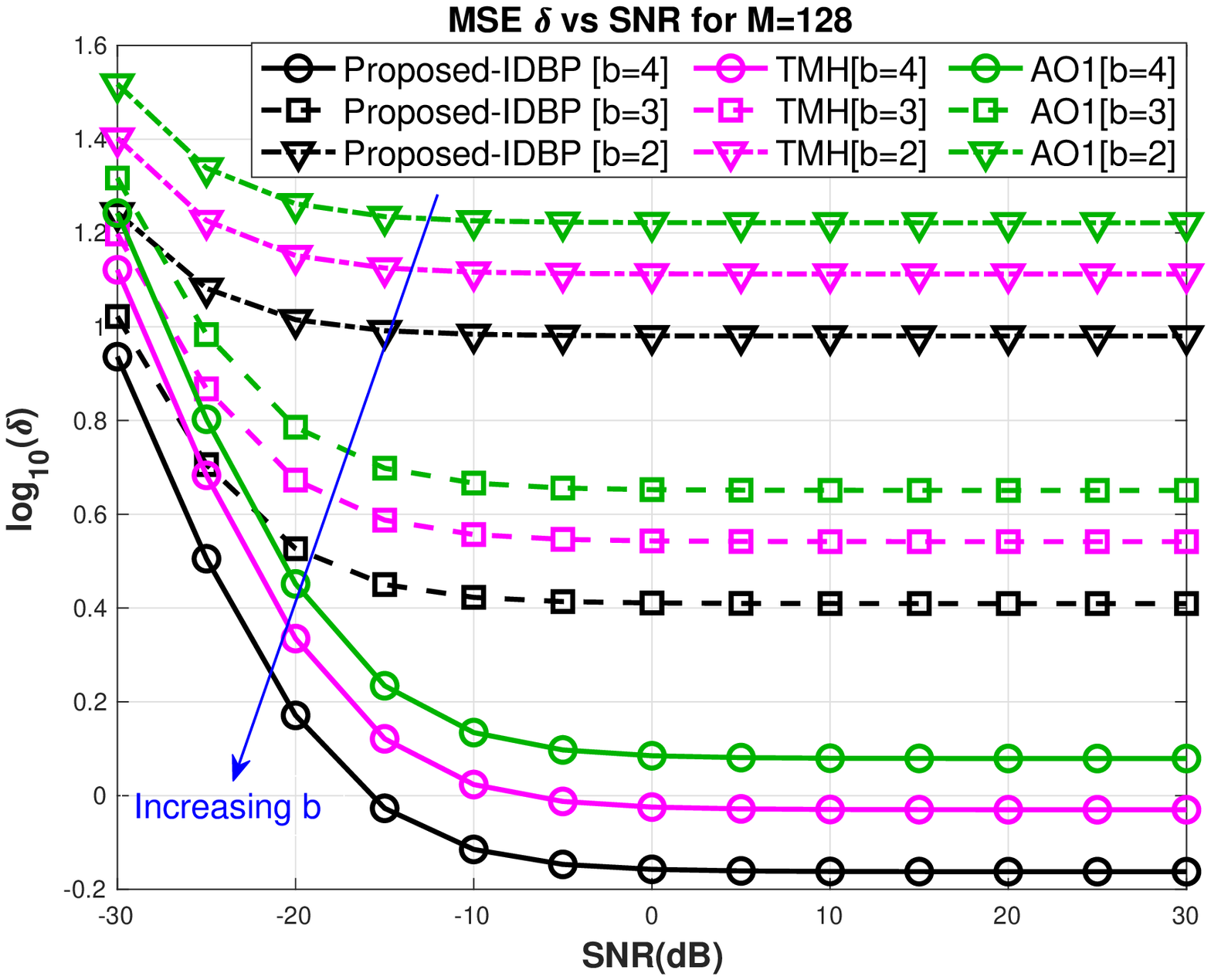}
\subcaption{\scriptsize MSE performance.}\label{fig_m128}
\end{minipage}\qquad
\begin{minipage}[b]{.4\textwidth}
\includegraphics[scale=0.4]{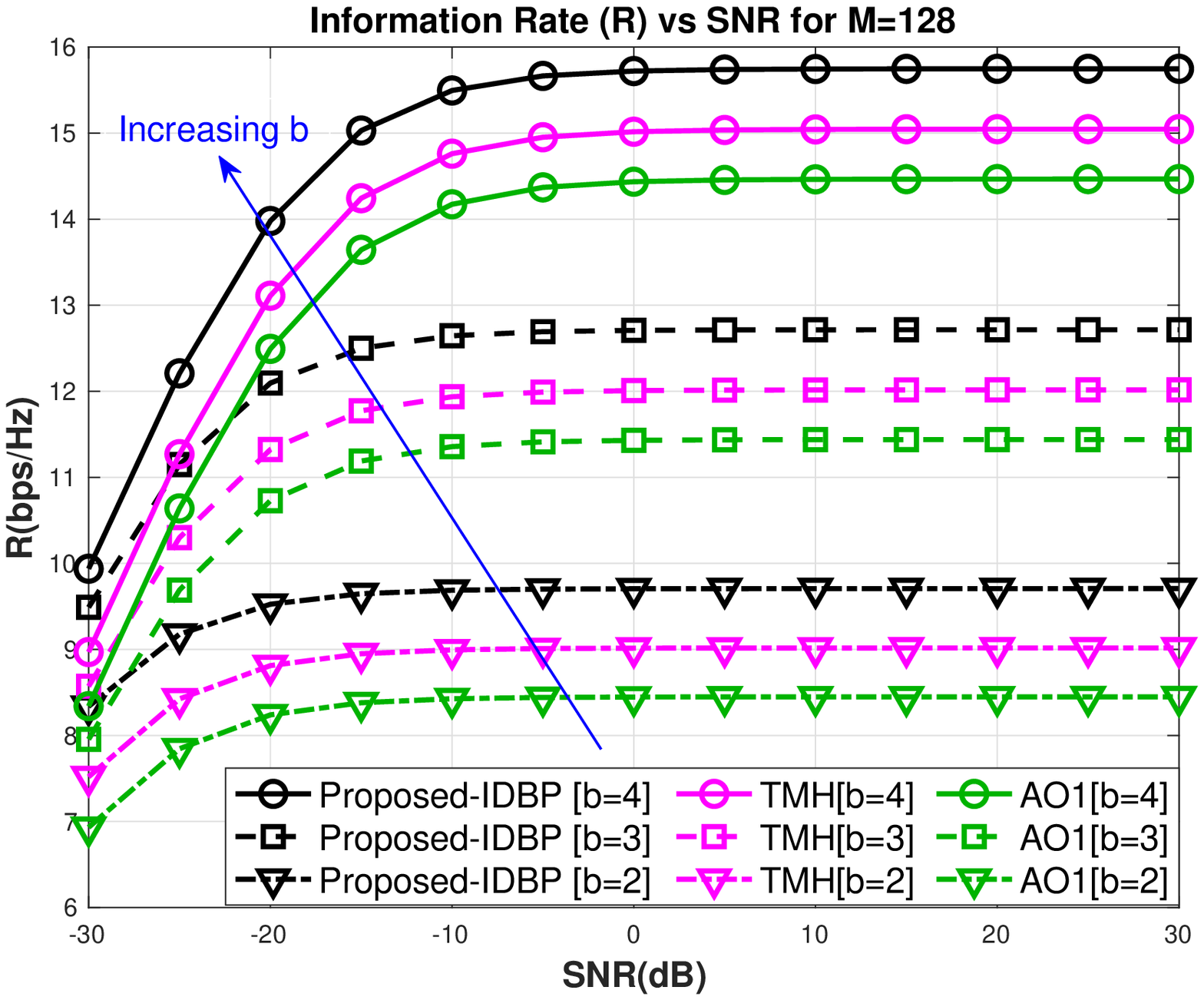}
\subcaption{\scriptsize Information rate performance.}\label{fig_r128}
\end{minipage}
\caption{\footnotesize MSE and information rate at various SNRs with proposed IDBP, TMH, and the AO algorithms with the number of RIS elements $M=128$, and for $b-$bit ADC in all of the receiver paths.}\label{fig_128}
\bigskip
% M = 256
\begin{minipage}[b]{.43\textwidth}
\includegraphics[scale=0.42]{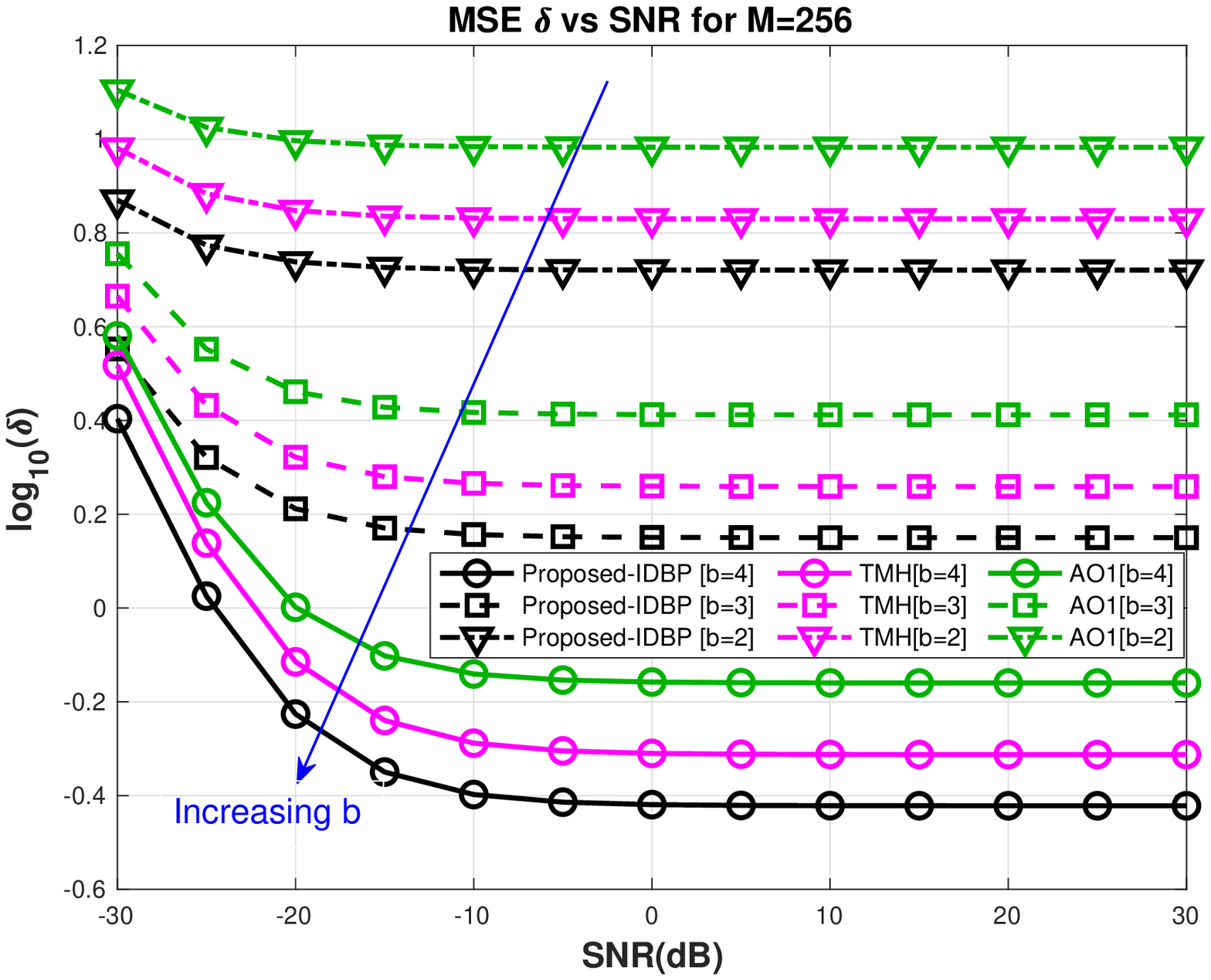}
\subcaption{\scriptsize MSE performance.}\label{fig_m256}
\end{minipage}\qquad
\begin{minipage}[b]{.4\textwidth}
\includegraphics[scale=0.4]{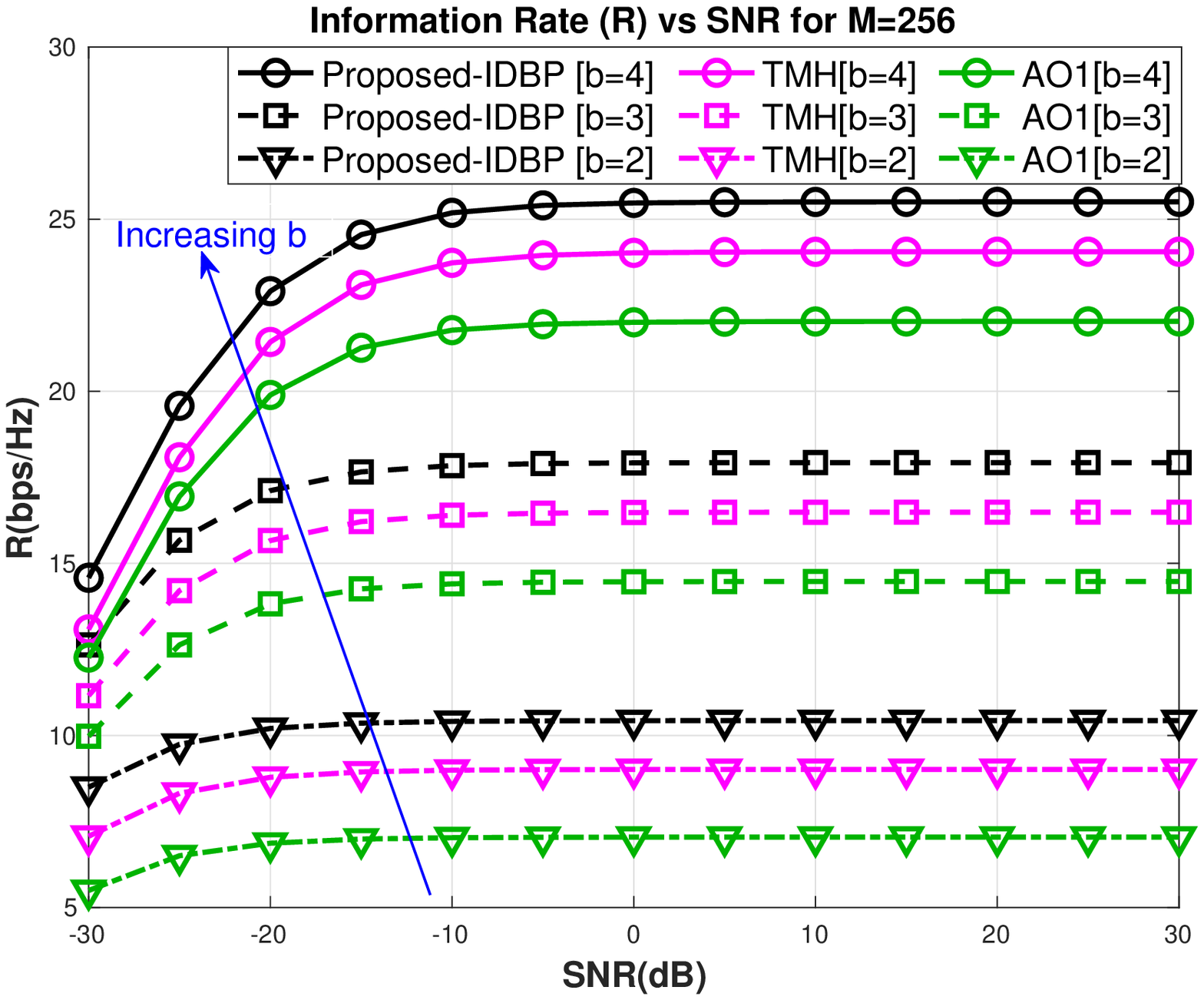}
\subcaption{\scriptsize Information rate performance.}\label{fig_r256}
\end{minipage}
\caption{\footnotesize MSE and information rate at various SNRs with proposed IDBP, TMH, and the AO algorithms with the number of RIS elements $M=256$, and for $b-$bit ADC in all of the receiver paths.}\label{fig_256}
\end{figure*}
%
%\begin{figure}[t!]
%\centering
%\includegraphics[scale=0.3]{BnPAlgo_2.png}%Earlier 0.34
%\caption{\small An illustration of the tree traversal using the proposed IDBP Algorithm.}
%\label{fig2_bp}
%\end{figure}
%

\section{Conclusion}\label{Conc}
The discrete phase optimization algorithm for a passive-RIS that assists a multi-user MaMIMO communication system under interference is studied in this paper. A novel algorithm based on an information-theoretic tree search called IDBP is proposed in this work. We developed the theoretical framework for this proposed algorithm using the Asymptotic Equipartition Property, and establish near-optimality guarantees. We discuss a method to design hybrid precoders and combiners along with RIS phase configuration to minimize the MSE of a blocked LOS link assisted by a RIS and show that it achieves the Cramer-Rao Lower-Bound. We consider the MaMIMO receivers to be equipped with low-resolution ADCs. We also show minimizing the MSE and maximizing the throughput of a blocked LOS link under interference are equivalent. Using simulation, we compare the proposed algorithm with the exhaustive search and two other state-of-the-art algorithms and demonstrate that the proposed method outperforms the state-of-the-art with significant computational savings with an appropriate selection of the prior distribution. This makes it more suitable for the proposed algorithm to be used with RIS having a large number of elements $M$ and a large number of configurable discrete phase settings $K$. The typical use cases of such scenarios are in vehicular and cellular backhaul wireless communication links.

% trigger a \newpage just before the given reference
% number - used to balance the columns on the last page
% adjust value as needed - may need to be readjusted if
% the document is modified later
%\IEEEtriggeratref{8}
% The "triggered" command can be changed if desired:
%\IEEEtriggercmd{\enlargethispage{-5in}}

% references section
% can use a bibliography generated by BibTeX as a .bbl file
% BibTeX documentation can be easily obtained at:
% http://mirror.ctan.org/biblio/bibtex/contrib/doc/
% The IEEEtran BibTeX style support page is at:
% http://www.michaelshell.org/tex/ieeetran/bibtex/
%\bibliographystyle{IEEEtran}
% argument is your BibTeX string definitions and bibliography database(s)
%\bibliography{IEEEabrv,../bib/paper}
%
% <OR> manually copy in the resultant .bbl file
% set second argument of \begin to the number of references
% (used to reserve space for the reference number labels box)
\bibliographystyle{IEEEtran}
\bibliography{ITVT_BibTexFile}
\appendix
\subsection{Expression for CRLB}\label{AppA}
%\chapter{One}\label{A1}
We have $\bold{K} = \alpha\bold{W}_S\bold{\bold{\Phi}}\bold{F}_S, \bold{K}^{-1} = \frac{1}{\alpha}\bold{F}_S^{-1}\bold{\bold{\Phi}^{-1}}\bold{W}_S^{-1},$
\begin{equation}\label{eq_app1}
\begin{split}
%&\bold{K} = \alpha\bold{W}_S\bold{\bold{\Phi}}\bold{F}_S, \bold{K}^{-1} = \frac{1}{\alpha}\bold{F}_S^{-1}\bold{\bold{\Phi}^{-1}}\bold{W}_S^{-1},\\
&(\bold{K}^H)^{-1} = \frac{1}{\alpha}(\bold{W}_S^H)^{-1}\bold{\Phi}(\bold{F}_S^H)^{-1},\\
& \bold{C} = \alpha^2\sigma_n^2\bold{W}\bold{W}^H +  \bold{W}_S\bold{\tilde{W}}_D^H\bold{D}_q^2\bold{\tilde{W}}_D\bold{W}_S^H,\\
\end{split}
\end{equation}
Substituting the terms in \eqref{eq_app1} for the CRLB expression, we have  
\begin{equation}\label{eq_app2}
\begin{split}
&{\bold{I}^{-1}({\bold{\hat{x}}})} = ({\bold{K}^H}{\bold{C}^{-1}}{\bold{K}})^{-1} = \bold{K}^{-1}\bold{C}(\bold{K}^H)^{-1},\\
&= \frac{1}{\alpha}\bold{F}_S^{-1}\bold{\bold{\Phi}^{-1}}\bold{W}_S^{-1}\Big[ \alpha^2\sigma_n^2\bold{W}\bold{W}^H \Big]\frac{1}{\alpha}(\bold{W}_S^H)^{-1}\bold{\Phi}(\bold{F}_S^H)^{-1}+\\
&\frac{1}{\alpha^2}\bold{F}_S^{-1}\bold{\bold{\Phi}^{-1}}\bold{W}_S^{-1}\Big[ \bold{W}_S\bold{\tilde{W}}_D^H\bold{D}_q^2\bold{\tilde{W}}_D\bold{W}_S^H \Big](\bold{W}_S^H)^{-1}\bold{\Phi}(\bold{F}_S^H)^{-1},\\
%&= \sigma_n^2\bold{F}_S^{-1}\bold{\bold{\Phi}^{-1}}\bold{\tilde{W}}_D^H\bold{W}_A^H\bold{W}_A\bold{\tilde{W}}_D\bold{\Phi}(\bold{F}_S^H)^{-1}\\
%&+ \frac{1}{\alpha^2}\bold{F}_S^{-1}\bold{\bold{\Phi}^{-1}}\bold{\tilde{W}}_D^H\bold{D}_q^2\bold{\tilde{W}}_D\bold{\Phi}(\bold{F}_S^H)^{-1},\\
&= \bold{F}_S^{-1}\Big[ \sigma_n^2\bold{\bold{\Phi}^{-1}}\bold{\tilde{W}}\bold{\Phi}  + \frac{1}{\alpha^2} \bold{\Phi}^{-1}\bold{\tilde{W}}_D^H\bold{D}_q^2\bold{\tilde{W}}_D\bold{\Phi} \Big](\bold{F}_S^H)^{-1},\\
&\text{ where }\bold{\tilde{W}} = \bold{\tilde{W}}_D^H\bold{W}_A^H\bold{W}_A\bold{\tilde{W}}_D.
\end{split}
\end{equation}

\subsection{Expression for the information rate and energy efficiency}\label{AppB}
Considering \eqref{eq_13}, we can write the expression for the information-rate of the MaMIMO channel as function of the RIS phase shift matrix $\bold{\Phi}$ as \cite{Zakir7}
\begin{equation}\label{eq_app3}
\begin{split}
R(\bold{\Phi}) &= I\big(\bold{x}; \bold{y}\big) = h(\bold{y}) - h(\bold{y}|\bold{x})\\
&= h(\bold{y}) - h(\bold{K}\bold{x} + \bold{n}_1|\bold{x}) \overset{(a)} = h(\bold{y}) - h(\bold{n}_1),
\end{split}
\end{equation}
where $I\big(\bold{x}; \bold{y}\big)$ is the mutual information of random variables $\bold{x}$ and $\bold{y}$, and $\bold{K}$ is a function of the RIS phase shift matrix $\bold{\Phi}$. (a) holds if and only if both $\bold{n}_q$ and $\bold{x}$ are Gaussian. Hence, ensures $\bold{y}$ is Gaussian. Also, if $\bold{y} \in \mathbb{C}^{N}$, then the differential entropy $h(\bold{y})$ is less than or equal to $\log_2\det(\pi e \bold{B})$ with equality if and only if $\bold{y}$ is circularly symmetric complex Gaussian with $E[\bold{y}\bold{y}^H] = \bold{B}$ \cite{Bengt}. That is
\begin{equation}\label{eq_app4}
\begin{split}
\bold{B} &= E \Big[ (\bold{K}\bold{x} + \bold{n}_1)(\bold{K}\bold{x} + \bold{n}_1)^H \Big]\\
&= E \Big[ \bold{K}\bold{x}\bold{x}^H\bold{K}^H + \bold{n}_1\bold{n}_1^H \Big] = p\bold{K}\bold{K}^H + \bold{C}.
\end{split}
\end{equation}
where $\bold{C} = \alpha^2\sigma_n^2\bold{W}\bold{W}^H +  \bold{W}_S\bold{\tilde{W}}_D^H\bold{D}_q^2\bold{\tilde{W}}_D\bold{W}_S^H$.
The differential entropies $h(\bold{y})$ and $h(\bold{n}_1)$ satisfy
\begin{equation}\label{eq_app5}
\begin{split}
h(\bold{y}) &\le \log_2\det(\pi e \bold{B}) = \log_2\det \bigg( \pi e \Big(p \bold{K}\bold{K}^H + \bold{C} \Big) \bigg),\\
h(\bold{n}_1) &\le \log_2\det(\pi e \bold{C}),
\end{split}
\end{equation}
with equality iff $\bold{y}$ and $\bold{n}_1$ posses circularly symmetric complex Gaussian statistics. However, using the Theorem-1 in \cite{Zakir7}, it is straightforward to see that $\bold{n}_1 \sim \mathcal{CN}(\bold{0},\bold{C})$. Hence we have
\begin{equation}\label{eq_app6}
h(\bold{n}_1) = \log_2\det(\pi e \bold{C}).
\end{equation}
Thus the expression for the information rate in \eqref{eq_app3} can be rewritten as
\begin{equation}\label{eq_app7}
\begin{split}
R(\bold{\Phi}) =  h(\bold{y}) - h(\bold{n}_1) & \overset{(b)} = \log_2\det( \pi e \bold{B}) - \log_2\det(\pi e \bold{C})\\
%&= \log_2\det \Big (\bold{Q}\bold{\Phi}^{-1} \Big)\\
 &= \log_2\det \Big ( p\bold{K}\bold{K}^H\bold{C}^{-1} + \bold{I}_{N} \Big),
\end{split}
\end{equation}
where (b) follows from the assumption that the input symbol vector $\bold{x}$ is circular symmetric Gaussian vector that could be modeled 
as $\bold{x} \sim \mathcal{CN}(\bold{0},p\bold{I}_{N})$ \cite{Rangan,VarBitAllocJour,Zakir7}. The information rate in \eqref{eq_app7} can be further simplified as \cite{Zakir7}
\begin{equation}\label{eq_app8} 
\begin{split}
R(\bold{\Phi}) &= \log_2\det \Big ( p\bold{K}\bold{K}^H\bold{C}^{-1}\bold{K}\bold{K}^{-1} + \bold{K}\bold{K}^{-1} \Big),\\
&= \log_2\det \Big ( p\bold{K} \big ( \bold{K}^H\bold{C}^{-1}\bold{K} + \frac{1}{p}\bold{I}_{N} \big) \bold{K}^{-1} \Big),\\
&= \log_2\det ( p\bold{K}) \det \Big ( \bold{K}^H\bold{C}^{-1}\bold{K} + \frac{1}{p}\bold{I}_{N} \Big) \det (\bold{K}^{-1}),\\
&= \log_2 p^{N}\det \Big ( \bold{K}^H\bold{C}^{-1}\bold{K} + \frac{1}{p}\bold{I}_{N} \Big),\\
&=N\log_2 p + \log_2\det \Big ( (\bold{I}^{-1}({\bold{\hat{x}}}))^{-1} + \frac{1}{p}\bold{I}_{N} \Big).
\end{split}
\end{equation}
Since the MSE $\bold{M}(\bold{x})$ achieves the CRLB by the design of the precoders and combiners as seen in \eqref{eq_6a}, we can also write the information-rate as follows
\begin{equation}\label{eq_app8}
\begin{split}
R(\bold{\Phi}) = N\log_2 p + \log_2\det \Big ( (\bold{M}(\bold{x}))^{-1} + \frac{1}{p}\bold{I}_{N} \Big).
\end{split}
\end{equation}
Similarly, we can define the energy efficiency (EE) as a function of the RIS phase matrix $\bold{\Phi}$ as
\begin{equation}\label{eq_app9}
\begin{split}
\eta_{EE}(\bold{\Phi}) &= \frac{R(\bold{\Phi})}{p(b)}\text{ (bits/Hz/Joule)}\\
&= \frac{{N}\log_2 p + \log_2\det \Big ( (\bold{M}(\bold{x}))^{-1} + \frac{1}{p}\bold{I}_{N} \Big)}{P_T + P_R + P_{RIS} + 2Nc{f_s}2^{b}}, 
\end{split}
\end{equation}
where where $p(b)$ is the total power consumed. Here $P_T$, $P_R$, and $P_{RIS}$ are the power consumed at the transmitter, receiver, and RIS respectively. The net ADC power consumption is $2Nc{f_s}2^{b}$, where $b$ is the ADC bit resolution used in all the $N$ RF paths, $c$ is the power consumed per conversion step and $f_s$ is the sampling rate in Hz \cite{Uplink}.\\
\indent From \eqref{eq_app8} and \eqref{eq_app9}, it can be shown that maximizing the information rate (throughput) $R$ or maximizing the energy efficiency $\eta_{EE}$ for a given (designed) hybrid precoders and combiners is equivalent to minimizing the CRLB ${\bold{I}^{-1}({\bold{\hat{x}}})}$. This is shown using the Lemma \ref{lemm1} below
\begin{Lemma}\label{lemm1}
\begin{equation}\label{eq_app10}
\underbrace{\text{max}}_{\bold{\Phi}}{R(\bold{\Phi})} \Leftrightarrow \underbrace{\text{max}}_{\bold{\Phi}}{\eta_{EE}(\bold{\Phi})} \Leftrightarrow \underbrace{\text{min}}_{\bold{\Phi}}{{\bold{I}^{-1}({\bold{\hat{x}}})}}.
\end{equation}
\end{Lemma}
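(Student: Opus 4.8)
The plan is to exhibit both equivalences as consequences of the monotonic dependence of $R(\bold{\Phi})$ and $\eta_{EE}(\bold{\Phi})$ on the CRLB matrix $\bold{I}^{-1}(\hat{\bold{x}})$, exploiting the fact that the only term in either objective that depends on the RIS phase matrix $\bold{\Phi}$ is the CRLB. I would start from the information-rate expression already derived in \eqref{eq_app8},
\begin{equation}\nonumber
R(\bold{\Phi}) = N\log_2 p + \log_2\det\Big( (\bold{I}^{-1}(\hat{\bold{x}}))^{-1} + \tfrac{1}{p}\bold{I}_N \Big),
\end{equation}
and observe that $N\log_2 p$ and the additive $\tfrac{1}{p}\bold{I}_N$ are constants independent of $\bold{\Phi}$. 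Thus maximizing $R(\bold{\Phi})$ over $\bold{\Phi}$ reduces to maximizing $\log_2\det\big( (\bold{I}^{-1}(\hat{\bold{x}}))^{-1} + \tfrac{1}{p}\bold{I}_N \big)$.

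The key step is a matrix-monotonicity argument. Write $\bold{J} = \bold{I}^{-1}(\hat{\bold{x}})$ for the (Hermitian, positive-definite) CRLB matrix. I would invoke the fact that $\log\det(\cdot)$ is strictly increasing in the positive-definite (Löwner) order, and that both $\bold{J} \mapsto \bold{J}^{-1}$ and $\bold{J}^{-1} \mapsto \bold{J}^{-1} + \tfrac{1}{p}\bold{I}_N$ are order-reversing and order-preserving respectively. Concretely, if $\bold{J}_1 \preceq \bold{J}_2$ then $\bold{J}_2^{-1} \preceq \bold{J}_1^{-1}$, hence $\bold{J}_2^{-1} + \tfrac{1}{p}\bold{I}_N \preceq \bold{J}_1^{-1} + \tfrac{1}{p}\bold{I}_N$, and therefore $\log_2\det(\bold{J}_2^{-1} + \tfrac{1}{p}\bold{I}_N) \le \log_2\det(\bold{J}_1^{-1} + \tfrac{1}{p}\bold{I}_N)$. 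This shows that decreasing the CRLB in the Löwner order increases $R$, so $\max_{\bold{\Phi}} R(\bold{\Phi}) \Leftrightarrow \min_{\bold{\Phi}} \bold{I}^{-1}(\hat{\bold{x}})$, establishing the right-hand equivalence.

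For the energy-efficiency equivalence, I would note from \eqref{eq_app9} that $\eta_{EE}(\bold{\Phi}) = R(\bold{\Phi})/p(b)$, where the denominator $p(b) = P_T + P_R + P_{RIS} + 2Nc f_s 2^b$ is the total consumed power, which depends only on the ADC resolution $b$ and fixed hardware budgets, and is therefore constant with respect to $\bold{\Phi}$. Since $p(b) > 0$, dividing by it is a strictly increasing affine rescaling, so $\arg\max_{\bold{\Phi}} \eta_{EE}(\bold{\Phi}) = \arg\max_{\bold{\Phi}} R(\bold{\Phi})$, which chains with the previous equivalence to give the full statement. The main obstacle is being careful about the scalar-versus-matrix minimization in the claim ``$\min_{\bold{\Phi}} \bold{I}^{-1}(\hat{\bold{x}})$'': strictly, the CRLB is a matrix, and minimizing a matrix is only well-defined in the Löwner order, which need not be total. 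I would therefore interpret the minimization as minimizing a scalar summary (e.g. the trace, which is the MSE $\delta$ per \eqref{eq_5}) and argue that, under the design constraint $\bold{K} = \bold{I}_N$ so that $\bold{I}^{-1}(\hat{\bold{x}}) = \bold{M}(\bold{x})$, the phase setting from \eqref{eq_21} simultaneously minimizes the relevant matrix in the achievable order, making the monotonicity argument apply coherently to both $R$ and $\eta_{EE}$.
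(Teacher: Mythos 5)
Your proposal is correct in substance but follows a genuinely different route from the paper. The paper diagonalizes the MSE matrix as $\bold{M}(\bold{x}) = \bold{B}\bold{\Lambda}\bold{B}^{-1}$ with real, nonnegative eigenvalues $\{\lambda_i\}$, then writes both objectives explicitly in terms of those eigenvalues: the MSE is $\tr(\bold{\Lambda}) = \sum_i \lambda_i$, and the determinant in the rate expression collapses to $\prod_i (p+\lambda_i)/\lambda_i$, so both the minimization of $\delta$ and the maximization of $R$ are driven by the single condition $\lambda_i \to 0$ for all $i$. You instead invoke operator monotonicity on the L\"owner order: inversion reverses the order on positive-definite matrices, adding $\tfrac{1}{p}\bold{I}_N$ preserves it, and $\log\det(\cdot)$ is monotone, so $R$ is a decreasing function of the CRLB matrix. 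Your route is the more standard and compact one; the paper's eigenvalue computation is more elementary and makes the common scalar structure of the two objectives visible. Both arguments share the same underlying looseness --- the lemma asks to ``minimize'' a matrix over a discrete feasible set, and neither the paper's limiting condition $\lambda_i \to 0$ nor your L\"owner comparison (which only orders comparable pairs) rigorously shows the argmin and argmax sets coincide for a fixed finite set of candidate $\bold{\Phi}$ --- but you explicitly flag this gap and propose the trace scalarization to repair it, which is in effect what the paper does implicitly by working with $\tr(\bold{\Lambda})$. The energy-efficiency step (denominator independent of $\bold{\Phi}$) is identical in both.
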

\begin{proof}
We can decompose the squared MSE matrix $\bold{M}(\bold{x})$ in \eqref{eq_5} as $\bold{M}(\bold{x}) = \bold{B}\bold{\Lambda}\bold{B}^{-1}$, where $\bold{\Lambda} = \diag (\lambda_1, \lambda_2, \cdots, \lambda_{N})$; such that $\{ \lambda_i \}_{i=1}^{N}$  are the eigenvalues of $\bold{M}(\bold{x})$. It is easy to note that $\bold{M}(\bold{x})$ is always a positive semidefinite matrix, and hence the eigenvalues $\{ \lambda_i \}_{i=1}^{N_s}$ are real and positive \cite{LAStrang}. 
We can further write \eqref{eq_5} as
\begin{equation}\label{eq_appC2}
\begin{split}
\delta(\bold{\Phi}) &\triangleq \tr(\bold{M}(\bold{x})) = \tr(\bold{\Lambda}).
\end{split}
\end{equation}
The MSE $\delta$ can be minimized when $\delta_{min}(\bold{\Phi}) = \min_{\bold{\Phi}}{\sum_{i=1}^{N_s} \lambda_i}$. Hence the condition for \eqref{eq_appC2} to be minimized is $\lambda_i \rightarrow 0, \forall i \in [1,N_s]$.\\
%\begin{equation}\label{eq_appC3}
%\begin{split}
%\delta_{min}(\bold{\Phi}) = \min_{\bold{\Phi}}{\sum_{i=1}^{N_s} \lambda_i}.
%\end{split}
%\end{equation}
%Hence the condition for \eqref{eq_appC3} to be minimized is $\lambda_i \rightarrow 0, \forall i \in [1,N_s]$.\\
\indent Now, to maximize $R(\bold{\Phi})$ in \eqref{eq_app8} we can write
\begin{equation}\label{eq_appC4}
\begin{split}
R_{max}(\bold{\Phi}) = N\log_2 p + \max_{\bold{\Phi}}{\log_2\det \Big ( (\bold{M}(\bold{x}))^{-1} + \frac{1}{p}\bold{I}_{N} \Big)},
\end{split}
\end{equation}
Since the term $N\log_2 p$ is not dependent on $\bold{\Phi}$, and we know that the function $\log_2(\cdot)$ is monotonically increasing, it suffices to maximize the expression \eqref{eq_appC4} to attain $R_{max}(\bold{\Phi})$
\begin{equation}\label{eq_appC4}
\begin{split}
&\bold{\Phi}^{R_{max}} = \argmaxF_{\bold{\Phi}}{ \Big\{ \det \Big ( (\bold{M}(\bold{x}))^{-1} + \frac{1}{p}\bold{I}_{N} \Big) \Big\}}.
\end{split}
\end{equation}
We can write 
\begin{equation}\label{eq_appC5}
\begin{split}
&\det ( \bold{B}\bold{\Lambda}^{-1}\bold{B}^{-1} + \frac{1}{p}\bold{I}_{N} ) = \det ( \bold{B} [ \bold{\Lambda}^{-1} + \frac{1}{p}\bold{I}_{N} ] \bold{B}^{-1}  ),\\
&= \det ( \bold{\Lambda}^{-1} + \frac{1}{p}\bold{I}_{N}) = \prod_{i=1}^{N} \Big( \frac{1}{\lambda_i} + \frac{1}{p} \Big) = \prod_{i=1}^{N} \Big( \frac{p + \lambda_i}{\lambda_i} \Big).
\end{split}
\end{equation}
This implies $\bold{\Phi}^{R_{max}} = \argmaxF_{\bold{\Phi}}{ \Big\{ \prod_{i=1}^{N} \Big( \frac{p + \lambda_i}{\lambda_i} \Big) \Big\}}$. Since the eigenvalues are real and positive, the maximization \eqref{eq_appC4} is achieved for a given $p$, when $\prod_{i=1}^N \lambda_i \rightarrow 0$ or $\lambda_i \rightarrow 0, \forall i \in [1,N]$, which is similar to the condition that was required to minimize the MSE $\delta$.\\
\indent For energy efficiency $\eta_{EE}(\bold{\Phi})$, maximizing the numerator $R(\bold{\Phi})$ is sufficient condition to maximize the same because the denominator does not depend on the $\bold{\Phi}$ and can be treated as constant.
\end{proof}
%it is straightforward to observe that maximizing the information rate (throughput) $R$ or maximizing the energy efficiency $\eta_{EE}$ for a given (designed) hybrid precoders and combiners is equivalent to minimizing the CRLB ${\bold{I}^{-1}({\bold{\hat{x}}})}$. Effectively we have
%\begin{equation}\label{eq_app10}
%\underbrace{\text{max}}_{\bold{\Phi}}{R(\bold{\Phi})} \Leftrightarrow \underbrace{\text{max}}_{\bold{\Phi}}{\eta_{EE}(\bold{\Phi})} \Leftrightarrow \underbrace{\text{min}}_{\bold{\Phi}}{{\bold{I}^{-1}({\bold{\hat{x}}})}}.
%\end{equation}
%
\subsection{The RIS phase optimization as an MDP}\label{AppC}
%\newtheorem{theorem}{Theorem}
%\begin{theorem}\label{thmC1}
%The solution to the problem $\bold{\Phi}^{opt} = \argminF_{\bold{\Phi}}{\lVert \diag(({\bold{\tilde{W}}_D^H})^{-1}\bold{\Phi}\bold{G}\bold{G}^H\bold{\Phi}^{-1}\bold{\tilde{W}}_D^{-1} + \bold{I}_{N}) \rVert }_F^2$ can be approximated to a Markov Decision Process.\\
%\end{theorem}
%\begin{proof}
The problem \eqref{eq_21} can be visualized as a stochastic-sequential-decision-making (SSDM) problem \cite{SDMP}. The solution $\bold{\Phi}$ at a given time or for a channel realization can be thought of as sequence of decisions to be taken to decide the phase-shifts of the $M$ reflecting elements considering a probabilistic model. The phase-shift value of the first element is selected based on the initial probabilities of the phase-shifts. The subsequent elements’ phase-shifts are arrived based on the previous elements’ phase-shift using the prior and conditional distributions. Here the solution $\bold{\Phi}$ can be thought of as a sequence of random variables $\mathit{\Phi} = \{ \Phi_1, \Phi_2, \cdots, \Phi_M \}$, where the discrete random variable $\Phi_i$ has a probability mass function (PMF) $p(\Phi_i)$.  Also, $p(\Phi_i|\Phi_j)$ represents the transition probabilities across the two reflection elements $i$ and $j$. Let the distribution $q(\Phi_1,\cdots,\Phi_{M})$ denote a prior distribution of the optimal solution to \eqref{eq_21}. An estimate of $q(\Phi_1,\cdots,\Phi_{M})$ can be sampled from the solution space of \eqref{eq_21} as described in Section \ref{qeval}. The sequence in which the phase-shifts are decided is shown as
\begin{equation}\label{eq_c1}
\Phi_1 \longrightarrow \Phi_2 \longrightarrow \Phi_3 \longrightarrow \cdots \longrightarrow \Phi_M.
\end{equation}
A measure of information called Information-to-go $(\mathcal{I}_g)$ was introduced in \cite{Tishby_RL}. The term $\mathcal{I}_g$ is associated with a sequence that specifies cumulated information processing cost or bandwidth required to quantify the future decisions and actions. The measure $(\mathcal{I}_g)$ defines how many bits on average the system needs to specify the future states in an SSDP (or its informational regret) with respect to the prior. This is written as
\begin{equation}\label{eq_c2}
\begin{split}
\mathcal{I}^{\bold{\Phi}^m}_g = \mathbb{E}_{p(\Phi_{m+1}, \cdots, \Phi_M | \bold{\Phi}^m)}\log\frac{p(\Phi_{m+1}, \cdots, \Phi_M | \bold{\Phi}^m)}{q(\Phi_{m+1}, \cdots, \Phi_N)},
\end{split}
\end{equation}
where $p(\Phi_{m+1}, \Phi_{m+2}, \cdots, \Phi_M | \bold{\Phi}^m)$ is the conditional distribution of the future looking sequence given a sequence $\bold{\Phi}^m$, and the fixed prior $q(\Phi_{m+1}, \Phi_{m+2}, \cdots, \Phi_N)$. The term $\bold{\Phi}^m$ indicates the partially observed (decided) sequence $\{ \Phi_1, \Phi_2, \cdots, \Phi_m \}$ for some $m \le M$.\\
However, the analysis with \eqref{eq_c2} is more complex and difficult, hence an approximation to Markovicity is considered \cite{Tishby_RL}. In which case, we can rewrite \eqref{eq_c2} as
\begin{equation}\label{eq_c3}
\begin{split}
&\mathcal{I}^{\bold{\Phi}^m}_g = \mathbb{E}_{p(\Phi_{m+1}, \cdots, \Phi_M | \Phi_m)}\log\frac{p(\Phi_{m+1}, \cdots, \Phi_M | \Phi_m)}{q(\Phi_{m+1}, \cdots, \Phi_M)}.
\end{split}
\end{equation}
In \cite{Tishby_RL}, the authors claim that \textit{"...the Markovicity condition seems, at first sight, a comparatively strong assumption which might seem to limit the applicability of the formalism for modeling the subjective knowledge of an agent. However, under the knowledge of the full state, in the model the agent itself is not assumed to have full access to the state."} (Section 8.2 in \cite{Tishby_RL}).\\
In the case when the prior $q(\Phi_{m+1}, \cdots, \Phi_M)$ can also be sampled as conditionals, that is $q(\Phi_{m+1}, \cdots, \Phi_M | \Phi_m)$, then we can rewrite \eqref{eq_c3} as
\begin{equation}\label{eq_c4}
\begin{split}
\mathcal{I}^{\bold{\Phi}^m}_g = \mathbb{E}_{p(\Phi_{m+1}, \cdots, \Phi_M | \Phi_m)}\log\frac{p(\Phi_{m+1}, \cdots, \Phi_M | \Phi_m)}{q(\Phi_{m+1}, \cdots, \Phi_M | \Phi_m)}.
\end{split}
\end{equation}
Using chain rule and Markovicity, we can establish a recursive relationship for \eqref{eq_c4} \cite{Zakir9}
\begin{equation}\label{eq_c5}
\begin{split}
&\mathcal{I}^{\bold{\Phi}^m}_g = \mathbb{E}_{p(\Phi_{m+1}, \cdots, \Phi_M | \Phi_m)}\log\frac{p(\Phi_{m+1}, \cdots, \Phi_M | \Phi_m)}{q(\Phi_{m+1}, \cdots, \Phi_M | \Phi_m)},\\
& = \mathbb{E}_{p(\Phi_{m+1}, \cdots, \Phi_M | \Phi_m)}\log\frac{p(\Phi_{m+1}|\Phi_{m}) \cdots p(\Phi_{M}|\Phi_{M-1})}{q(\Phi_{m+1}|\Phi_{m}) \cdots q(\Phi_{M}|\Phi_{M-1})},\\
&= \mathbb{E}_{p(\Phi_{m+1}|\Phi_{m})} \log \Bigg[ \frac{p(\Phi_{m+1}|\Phi_{m})}{q(\Phi_{m+1}|\Phi_{m})} \Bigg] + \mathcal{I}^{\bold{\Phi}^{m+1}}_g.
\end{split}
\end{equation}
Hence $\mathcal{I}^{\bold{\Phi}^m}_g$ can be written as a value function with a recursive relationship that satisfy the Bellman's optimality criterion \cite{Tishby_RL, Zakir9} and is a classical example of a MDP. It is also worth noting that effectively \eqref{eq_c5} can be written as
\begin{equation}\label{eq_c6}
\begin{split}
\mathcal{I}^{\bold{\Phi}}_g = D_{KL}(p(\Phi_1,\cdots,\Phi_{M})||q(\Phi_1,\cdots,\Phi_{M})).
\end{split}
\end{equation}
\indent Intuitively, $\mathcal{I}^{\bold{\Phi}}_g \approx 0$ implies that the least information is required to pursue the path $\bold{\Phi}$ for optimality or near-optimality. On the other hand, a large value of $\mathcal{I}^{\bold{\Phi}}_g$ implies considerable information is required to make the decision (or inability to make a decision) in pursuing the path $\bold{\Phi}$ for optimality.
%\end{proof}
%Daanish
%
\subsection{Alternating optimization}\label{AppD}
The problem in \eqref{eq_d1} can also be solved by updating just one or a few blocks of optimization variables ($\bold{F}_S,\bold{F}_A,\bold{\tilde{F}}_D,\bold{\Phi},{\bold{W}_A^H},{\bold{\tilde{W}}_D^H},\bold{W}_S$) using alternating optimization \cite{JointRISPrec,bcu1}.
\begin{equation}\label{eq_d1}
%\begin{split}\substack{\bold{F}_S,\bold{F}_A,\bold{\tilde{F}}_D,\bold{\Phi}}; \\ \mu_1, \mu_2, \cdots, \mu_K}
\delta = \min_{\substack{\bold{F}_S,\bold{F}_A,\bold{\tilde{F}}_D,\bold{\Phi}\\{\bold{W}_A^H},{\bold{\tilde{W}}_D^H},\bold{W}_S}}{\mathcal{L}(\bold{F}_S,\bold{F}_A,\bold{\tilde{F}}_D,\bold{\Phi},{\bold{W}_A^H},{\bold{\tilde{W}}_D^H},\bold{W}_S)}, 
%\end{split}
\end{equation}
where $\mathcal{L}(\bold{F}_S,\bold{F}_A,\bold{\tilde{F}}_D,\bold{\Phi},{\bold{W}_A^H},{\bold{\tilde{W}}_D^H},\bold{W}_S) = \tr(\bold{M}(\bold{x}))$.
The algorithm is described below
\begin{breakablealgorithm}
%\begin{algorithm}[H]
  \caption{Alternating optimization}\label{Algo_BCD}
  \begin{algorithmic}[1]
   \small
      \Procedure{AO}{$\bold{F}_S^0,\bold{F}_A^0,\bold{\tilde{F}}_D^0,\bold{\Phi}^0,{\bold{W}_A^H}^0,\bold{\tilde{W}}_D^{H^0},\bold{W}_S^0,\epsilon_T$}
      		\State{$k \gets 0$}
		\State {$\delta_{k} \gets \mathcal{L}(\bold{F}_S^{k},\bold{F}_A^{k},\bold{\tilde{F}}_D^{k},\bold{\Phi}^{k},{\bold{W}_A^H}^{k},\bold{\tilde{W}}_D^{H^k},\bold{W}_S^{k})$}
      		\Do
	                 %\State{$\text{Solve for }\bold{F}_A,\bold{\tilde{F}}_D}\text{ using \eqref{eq_10}}$
	                 \State{$\text{Solve for }\bold{F}_A,\bold{\tilde{F}}_D\text{ using }\textit{MSER-Precoding in}\text{ \cite{JointRISPrec}}$}
	                 \State {$\{{\bold{F}_A^{k+1}},{\bold{\tilde{F}}_D^{k+1}}\} \gets$}
	                 \State {$\argminF_{\bold{F}_A,\bold{\tilde{F}}_D}{\mathcal{L}(\bold{F}_S^k,\bold{F}_A,\bold{\tilde{F}}_D,\bold{\Phi}^k,{\bold{W}_A^H}^k,\bold{\tilde{W}}_D^{H^k},\bold{W}_S^k)}$}
	                 \State{$\text{Solve for }\bold{W}_A^H, \bold{\tilde{W}}_D^H \text{ using \eqref{eq_12}}$}
	                 \State {$\{{\bold{W}_A^H}^{k+1},\bold{\tilde{W}}_D^{H^{k+1}}\} \gets$}
	                 \State {$\argminF_{\bold{W}_A^H, \bold{\tilde{W}}_D^H}{\mathcal{L}(\bold{F}_S^{k},\bold{F}_A^{k+1},\bold{\tilde{F}}_D^{k+1},\bold{\Phi}^k,{\bold{W}_A^H},{\bold{\tilde{W}}_D^H},\bold{W}_S^k)}$}	
	                 \State{$\text{Solve for }\bold{\Phi}\text{ using }\textit{eMSER-Reflecting in}\text{ \cite{JointRISPrec}}$
	                 \State {$\bold{\Phi}^{k+1} \gets$}
	                 \State {$\argminF_{\bold{\Phi}}{\mathcal{L}(\bold{F}_S^{k},\bold{F}_A^{k+1},\bold{\tilde{F}}_D^{k+1},\bold{\Phi},{\bold{W}_A^H}^{k+1},\bold{\tilde{W}}_D^{H^{k+1}},\bold{W}_S^k)}$}
	                 \State{$\text{Solve for }\bold{F}_S, \bold{W}_S\text{ using \eqref{eq_23} and \eqref{eq_24}}$}
	                 \State {$\{ \bold{F}_S^{k+1}, \bold{W}_S^{k+1} \} \gets \argminF_{\bold{F}_S, \bold{W}_S}{\mathcal{L}(\bold{F}_S,\bold{F}_A^{k+1},\bold{\tilde{F}}_D^{k+1},\bold{\Phi}^{k+1}},$} %mod
	                 \State {${\bold{W}_A^H}^{k+1},\bold{\tilde{W}}_D^{H^{k+1}},\bold{W}_S)$}
	                 \State {$\delta_{k+1} \gets $}
	                 \State{$\mathcal{L}(\bold{F}_S^{k+1},\bold{F}_A^{k+1},\bold{\tilde{F}}_D^{k+1},\bold{\Phi}^{k+1},{\bold{W}_A^H}^{k+1},\bold{\tilde{W}}_D^{H^{k+1}},\bold{W}_S^{k+1})$} 
	                 \State{$err \gets \delta_{k} - \delta_{k+1}$}  
	                 \State{$k \gets k + 1$}   
		\doWhile{$\{ err > \epsilon_T \}$} 
  \EndProcedure}
\end{algorithmic}
%\end{algorithm}
\end{breakablealgorithm}

\vskip -3\baselineskip
% Author Biographies 
\begin{IEEEbiography}[{\includegraphics[width=1in,height=1.25in,clip,keepaspectratio,trim=4 4 4 4]{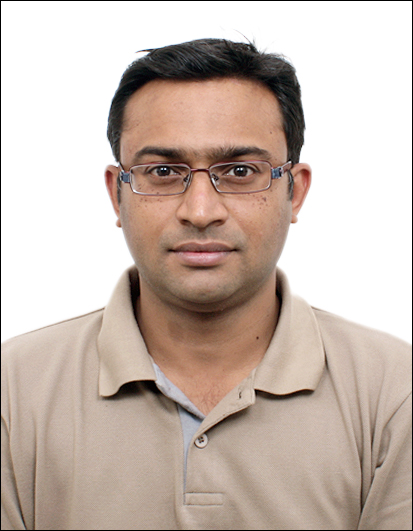}}]%
{I. Zakir Ahmed}
(Member, IEEE) received his Bachelor’s degree in Electrical Engineering from UVCE, Bengaluru, and an M.S. degree from Illinois Institute of Technology, Chicago. He is currently pursuing his Ph.D. Degree in Electrical Engineering from the University of California at Santa Cruz. He has over 20 years of industry experience as a researcher, design engineer, and practitioner in wireless communication and signal processing domains. He worked in the Motorola DSP group in Bangalore, India, from 2000 to 2009. He worked as an RF engineer with National Instruments from 2009 to 2020. Since 2020 he has been based out of San Diego, California, working as a Machine-learning research engineer for applications in RF systems. He holds more than ten US patents and has several research publications in the areas of wireless communication and signal processing. His research interests include optimization methods, signal processing, wireless communication, information theory, and machine learning.
\end{IEEEbiography}
\vskip -3\baselineskip plus -1fil
\begin{IEEEbiography}[{\includegraphics[width=1in,height=1.25in,clip,keepaspectratio,trim=4 4 4 4]{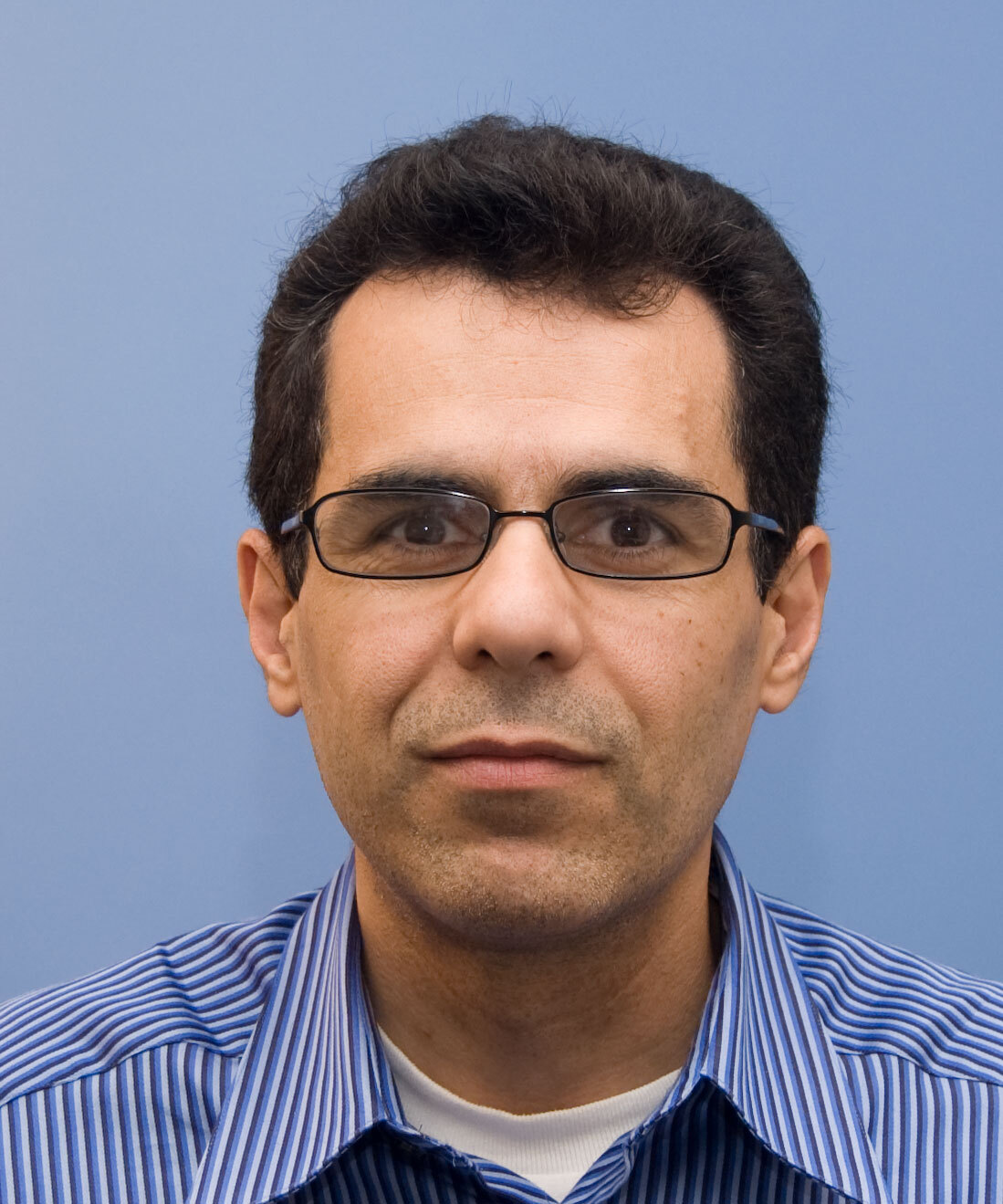}}]%
{Hamid R. Sadjadpour}
(Senior Member, IEEE) received the B.S. and M.S. degrees from Sharif University of Technology, and the Ph.D. degree from University of Southern California (USC). In 1995, he joined the AT\&T Shannon Research Laboratory, as a Technical Staff Member and later as a Principal Member of Technical Staff. In 2001, he joined the University of California at Santa Cruz where he is currently a Professor. He has authored over 200 publications and holds 20 patents. His research interests are in the general areas of wireless communications, security and networks. He has served as a Technical Program Committee Member and the Chair in numerous conferences. He is a co-recipient of the best paper awards at the 2007 International Symposium on Performance Evaluation of Computer and Telecommunication Systems, the 2008 Military Communications conference, the 2010 European Wireless Conference, and the 2017 Conference on Cloud and Big Data Computing.
\end{IEEEbiography}
\vskip -2\baselineskip plus -1fil
\begin{IEEEbiography}[{\includegraphics[width=1in,height=1.25in,clip,keepaspectratio,trim=4 4 4 4]{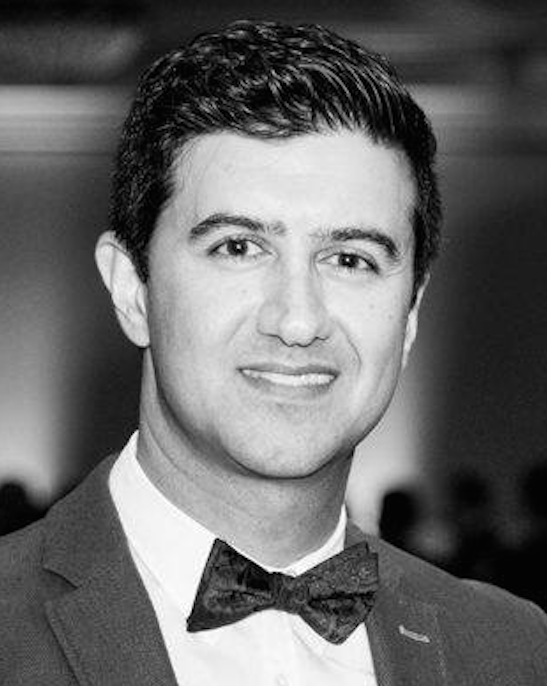}}]%
{Shahram Yousefi}
(Senior Member, IEEE) has 25 years of experience in engineering, technology, teaching, managing teams and projects, as well as public speaking and advising. He received his B.Sc. in electrical engineering from University of Tehran and his PhD in telecom from the University of Waterloo in Canada. He has been with the University of Toronto, Canada, Ecole Polytechnique Fédéral de Lausanne (EPFL), Switzerland, Jilin University, China, University of California, Santa Cruz, USA, and Queen's University, Kingston, Canada, where he is currently a tenured full professor and Associate Dean of Corporate Relations. Shahram has served as the Editor–in-Chief of the IEEE CJECE journal and has received more than 20 awards and scholarships including the Golden Apple teaching award as well as the Natural Sciences and Engineering Research Council of Canada’s Discovery Accelerator Supplement (NSERC DAS) award. His research interests include communications, cloud systems, big data, networks, information theory, signal processing, control, algorithms, and machine intelligence in which he holds a number of patents. He co-founded Canarmony Corp. (2014), MESH Scheduling Inc. (2018), and OPTT Inc. (2018) to apply algorithms to make life better and more harmonious. One of Shahram’s patents in the area of data storage was licensed to revolutionize a \$500b solid state storage industry. Shahram admires entrepreneurs and advises a number of ventures.
\end{IEEEbiography}

% that's all folks
\end{document}

%End of the Document